\newcommand{\R}{\mathbb{R}}
\newcommand{\grad}{\bm{\nabla}}
\newcommand{\Div}{\mbox{\rm{\bf{div\,}}}}
\newcommand{\ut}[1]{\bm u_{#1}}
\newcommand{\eps}{\epsilon}
\newcommand{\intd}{\, \mathrm{d} }
\newcommand{\conste}{\, \mathrm{e} }
\newtheorem{theorem}{Theorem}
\newtheorem{proposition}[theorem]{Proposition}%
\newtheorem{corollary}[theorem]{Corollary}%
\begin{document}

\title[Generalised Diffusion Probabilistic Scale-Spaces]{Generalised 
    Diffusion Probabilistic Scale-Spaces}


\author*[1]{\fnm{Pascal} \sur{Peter}}\email{peter@mia.uni-saarland.de}

%

\affil*[1]{\orgdiv{Mathematical Image Analysis Group, Faculty of 
        Mathematics and Computer Science}, \orgname{Saarland University}, 
    \orgaddress{\street{Campus E1.7}, \city{Saarbr\"ucken}, 
        \postcode{66041}, \state{Saarland}, \country{Germany}}}

\abstract{
    Diffusion probabilistic models excel at sampling new images from 
    learned distributions. Originally motivated by drift-diffusion concepts 
    from physics, they apply image perturbations such as noise and blur in a 
    forward process that results in a tractable probability distribution. A 
    corresponding learned reverse process generates images and can be 
    conditioned on side information, which leads to a wide variety of practical 
    applications.
    
    Most of the research focus currently lies on practice-oriented 
    extensions. In contrast, the theoretical background remains largely 
    unexplored, in particular the relations to drift-diffusion. In order to 
    shed light on these connections to classical image filtering, we propose a 
    generalised scale-space theory for diffusion probabilistic models. 
    Moreover, we show conceptual and empirical connections to diffusion and 
    osmosis filters.}

\keywords{diffusion probabilistic models, scale-spaces, 
    drift-diffusion,       
    osmosis}



\maketitle

\section{Introduction}

Diffusion probabilistic models~\cite{SMDH13} have recently risen to
the state-of-the-art in image generation, surpassing  generative adversarial 
networks~\cite{GPMX14} in popularity. In addition to significant research 
activity, the availability of pre-trained latent diffusion 
networks~\cite{RBLE+22} has also brought diffusion models to widespread 
public attention~\cite{DN21}. Practical applications are numerous, including 
the generation of convincing, high fidelity images from text prompts or partial 
image data.

Initial diffusion probabilistic models~\cite{SMDH13,HJA20,KSBH21,SE19, 
    SE20, SDME21, SSKK+21} relied on a forward drift-diffusion process that 
gradually perturbs input images with noise and can be reversed by deep 
learning. 
Recently, it has been shown that the concrete mechanism that 
gradually destroys information in the forward process has a significant impact 
on the image generation by the reverse process. Alternative proposed image 
degradations include blur \cite{BBCL+22}, combinations of noise and blur 
\cite{DDTDM23, RHS23, HS22}, or image masking~\cite{DDTDM23}.

So far, diffusion probabilistic research was mostly of practical 
nature. Some theoretical contributions established connections to other
fields such as score-matching \cite{SE19, SE20, SDME21, SSKK+21}, variational 
autoencoders~\cite{KSBH21}, or normalising flows~\cite{SHH23}. Diffusion 
    probabilistic models have been initially motivated~\cite{SMDH13} by 
drift-diffusion, a well-known process in physics. However, its connections to 
other physics-inspired methods remain mostly unexplored. Closely related  
concepts have a long 
tradition in model-based visual computing, such as osmosis filtering proposed 
by~\citet{WHBV13}. In addition, there is a wide variety of diffusion-based 
scale-spaces~\cite{AGLM93,Ii62,SchW98,We97}. Conceptually, these scale-spaces 
embed given images into a family of simplified versions. This resembles the 
gradual removal of image features in the forward process of diffusion 
    probabilistic models. 

Despite this multitude of connections, there is a distinct lack of systematic 
analysis of diffusion probabilistic models from a scale-space 
perspective. 
This is particularly surprising due to the impact 
of the forward process on the generative performance~\cite{HS22,RHS23}. It 
indicates that a  deeper understanding of the information reduction could 
also lead to further practical improvements in the future.

\subsection{Our Contribution}

With our previous conference publication~\cite{Pe23} we made first steps to 
bridge this gap between the scale-space and deep learning communities. To this 
end, we introduced first generalised scale-space concepts for diffusion 
    probabilistic models. In this work, we further explore the theoretical 
background of this successful paradigm in deep learning. 
In contrast to traditional scale-spaces, 
we consider the evolution of probability distributions instead of images. 
Despite this departure from conventional families of images, we can show  
scale-space properties in  the sense of \citet{AGLM93}. These include 
architectural properties, invariances, and entropy-based measures of 
simplification.

In addition to our previous findings~\cite{Pe23}, our novel contributions 
include
\begin{itemize}
    \item a generalisation of our scale-space theory for diffusion 
    probabilistic models (DPMs) which includes both variance-preserving 
    and variance-exploding approaches,
    \item generalised scale-space properties for the reverse process of DPM,
    \item a scale-space theory for inverse heat dissipation~\cite{RHS23} and 
    blurring diffusion~\cite{HS22},
    \item and a significantly extended theoretical and empirical comparison of 
    three diffusion probabilistic models to homogeneous 
    diffusion~\cite{Ii62} and osmosis filtering~\cite{WHBV13}.
\end{itemize}

\section{Related Work}

Besides diffusion probabilistic models themselves, two additional 
research areas are relevant for our own work. 
Since we adopt a scale-space perspective, 
classical scale-space research acts as the foundation for our generalised 
theory.
Furthermore, we discuss connections to osmosis filters, which have a tradition 
in model-based visual computing.

\subsection{Diffusion Probabilistic Models}

Large parts of our scale-space theory are based on the work of 
\citet{SMDH13}, which pioneered diffusion probabilistic models (DPMs). The 
latent diffusion model by \citet{RBLE+22} was integral for the gain of 
popularity of this approach. The public availability of code and trained models 
sparked many practical applications and caused a shift~\cite{DN21} away from 
generative adversarial networks~\cite{GPMX14}. Applications range from image 
generation~\cite{SMDH13,HJA20,DN21,RBLE+22} over image 
inpainting~\cite{LDRY+22,SCCL+22}, 
super-resolution~\cite{HSCF+22}, segmentation~\cite{WSBV+22}, and 
deblurring~\cite{RDTGM+22} to the generation of different types of media, 
including video~\cite{HSCF+22} and audio~\cite{KPHZ+21}.

Following the same principles as early DPMs, most approaches rely on adding 
noise in the forward process. Since coarse-to-fine strategies were shown to 
improve DPMs~\cite{ND21,HSCF+22}, they inspired \citet{LCKY22} to use blurring 
with Gaussian convolution instead. \citet{RHS23} leveraged the 
equivalence of Gaussian blur to homogeneous diffusion~\cite{Ii62} to establish 
inverse heat dissipation models. 
They only add small amounts of observation noise, while the 
model of~\citet{HS22} generalises this process and allows more substantial 
contributions of noise. Due to the close relations to both classical diffusion 
and osmosis filtering, we address both of these models in detail in 
Section~\ref{sec:blurringdiffusion}. Image masking was recently also proposed 
as an alternative degradation in the forward process by \citet{DDTDM23}.

We also rely on a more general version of the original DPM model that was 
proposed by \citet{KSBH21}. In addition, they introduced the notion of 
variational diffusion models and showed connections to variational 
autoencoders. 
Interestingly, early results of \citet{Vi11} established connections between 
denoising autoencoders and score-matching. Later publications 
showed more direct relations of score-based approaches to DPM~\cite{HJA20}. 
\citet{SE19} related diffusion models to score-based 
Langevin dynamics with later follow-up results \cite{SE20, SDME21, SSKK+21} 
that also include continuous time processes.

Theoretical contributions that are related to our own work are rare. Recently, 
\citet{SHH23} connected diffusion probabilistic models to a multitude of 
different concepts under the common framework of normalising flows. 
This includes relations to osmosis filtering, but not from a scale-space 
perspective. 
\citet{FRRH+23} have investigated functional diffusion processes as a 
time-continuous generalisation of classical DPMs. They also allow both noise 
and blur as image degradations. In contrast, our models are time- and 
space-discrete.

\subsection{Scale-Spaces}

The second major field of research that forms the foundation of our 
contribution is scale-space theory. Scale-spaces have a long tradition in 
visual computing. Most of them rely on partial 
differential~\cite{AGLM93,Ii62,Li11,SchW98,We97} or pseudo-differential 
equations~\cite{DFGH04,SW16}, but they have also been 
considered for wavelets~\cite{CL01}, sparse inpainting-based image 
representations~\cite{CPW19}, or hierarchical quantisation 
operators~\cite{Pe21}. Such classical 
scale-spaces describe the evolution of an input image over multiple scales, 
which gradually simplifies the image. Since they obey a hierarchical structure 
and provide guarantees for simplification, they allow to analyse image features 
that are specific to individual scales. This makes them useful for tasks such 
as corner detection~\cite{AM94a}, modern invariant feature 
descriptors~\cite{ABD12,Lo04a}, or motion 
estimation~\cite{AWS99a,DWBZ12,AMJBT20}.

General principles for classical scale-spaces are vital for our contributions.
They form the foundation for our generalised scale-space theory for DPM. We 
establish architectural, invariance, and information reduction properties in 
the sense of Alvarez et 
al.~\cite{AGLM93} for this new setting. In Section~\ref{sec:probdiffscale} we 
also mention where we drew inspiration from this contribution and other 
sources~\cite{Ii62,We97} in more detail.

There are many different classes of scale-spaces, originating from the early 
work by Iijima~\cite{Ii62}, which was later popularised by Witkin~\cite{Wi83}. 
They proposed a scale-space that can be interpreted as evolutions according to 
homogeneous diffusion. These initial Gaussian scale-spaces 
\cite{AGLM93,Ii62,Wi83,Li94,SNFJ96,Fl13} have been generalised with 
pseudodifferential operators~\cite{DFGH04,SW16,FS01,BDW05} or nonlinear 
diffusion equations~\cite{SchW98,We97}. Moreover, a comprehensive theory for 
shape analysis exists in the form of morphological 
scale-spaces~\cite{AGLM93,BM92,BS94,CS96,KS96,ST93}. Wavelet shrinkage as 
a form of blurring \cite{CL01} and sparse image 
representations~\cite{CPW19,Pe21} have been 
considered from a scale-space perspective as well. Among this wide variety of 
different options, for 
us, the original Gaussian scale-space is still the most relevant. It is 
    closely related to the blurring diffusion processes we consider in 
Section~\ref{sec:blurringdiffusion}.

Our novel class of stochastic scale-spaces considers families of 
probability distributions instead of sequences of images. Conceptually similar 
approaches are rare. The Ph.D. thesis of \citet{Ma00} proposes a 
stochastic concept, which also considers drift-diffusion. However, it is not 
related to deep learning and simplifies images in a different way. Instead of 
adding noise or blur, it shuffles image pixels. Similarly, 
\citet{KV99} proposed ``locally orderless images'', a local pixel shuffling as 
an alternative to blur. Other probabilistic scale-space concepts are only 
broadly related. There have been theoretical considerations of connections 
between diffusion filters and the statistics of natural images \cite{Pe03} and 
practical applications in stem cell differentiation \cite{HKMR+16}.

In parallel to our conference publication~\cite{Pe23}, \citet{ZPKC23} have used 
homogeneous diffusion scale-spaces on probability densities. However, they 
learn image priors via denoising score matching with practical applications to 
image denoising. We on the other hand focus on the scale-space theory of 
generative diffusion probabilistic models.

\subsection{Osmosis Filtering}

In visual computing, osmosis filtering is a successful class of filters that 
has been introduced by \citet{WHBV13} and generalises diffusion 
filtering~\cite{We97}. Even though it creates deterministic image evolutions, 
it is connected to statistical physics. Namely, it is closely related to the 
Fokker-Planck equation \cite{Ri84} and by extension also to 
Langevin formulations and the Beltrami flow \cite{So01b}. This suggests that 
there could also be connections to diffusion probabilistic models.

Since such connections to drift-diffusion also apply to diffusion 
probabilistic models, we investigate connections between these approaches in 
Section~\ref{sec:osmosis}. There, we also discuss the continuous theory for 
osmosis filters as it was originally proposed by \citet{WHBV13} and later 
extended by \citet{Sc18}. \citet{VHWS13} introduced both the corresponding 
discrete theory and a fast implicit solver which we use for our experiments.

Osmosis filters are well suited to integrate conflicting information from 
multiple images, which makes them an excellent tool for image editing 
\cite{DMM18,VHWS13,WHBV13}. Additionally, they have been successfully used for 
shadow removal \cite{DMM18,PCCSW19,WHBV13}, the fusion of spectral 
images~\cite{PCD18,PCBP+20}, and image blending~\cite{BPW23}. There are also 
applications for osmosis that do 
not deal with images. Notably, \citet{HBWV12} used osmosis to enhance numerical 
methods and considered a Markov chain formulation. While we deal with Markov 
processes in this paper, our interpretation of osmosis and the context in which 
we use it is significantly different.

There are also conceptually similar methods in visual computing that are also 
connected to drift--diffusion and predate osmosis. Namely, \citet{HBVW09} 
proposed a lattice Boltzmann model for dithering. Other broadly related 
filters are the directed diffusion models of \citet{IN93} and the covariant 
derivative approach of \citet{Ge06}.

\section{Organisation of the Paper}

We introduce the basic ideas of diffusion probabilistic models in 
Section~\ref{sec:probdiff}, including Markov formulations for the forward and 
reverse processes. Based on these foundations, we propose generalised 
scale-space properties for three classes of probabilistic forward diffusion 
processes in Section~\ref{sec:probdiffscale} and briefly address reverse 
processes as well. As a link to classical scale-spaces and deterministic image 
filters, we investigate relations of diffusion probabilistic models to 
homogeneous diffusion and osmosis filtering in Section~\ref{sec:discussion}. We 
conclude with a discussion and an outlook in Section~\ref{sec:conclusion}.


\section{Diffusion Probabilistic Models}
\label{sec:probdiff}

Diffusion probabilistic models~\cite{SMDH13} are generative approaches 
    which have the goal to create new samples from a desired distribution. 
This distribution is unknown except for a set of given representatives. 
For image processing purposes, this 
training data typically consists of a set of images $\bm f_1,...,\bm 
f_{n_t} \in \R^{n_x n_y n_c}$ with $n_c$ colour channels of size $n_x \times 
n_y$ and $n=n_x n_y n_c$ pixels. From 
a stochastic point of view, these images are realisations of a random variable 
$\bm F$ with an unknown probability density function $p(\bm F)$. 
DPMs aim to sample from this \emph{target distribution}.

\begin{figure*}[t]
    \centering
    \includegraphics[width=0.8\linewidth]{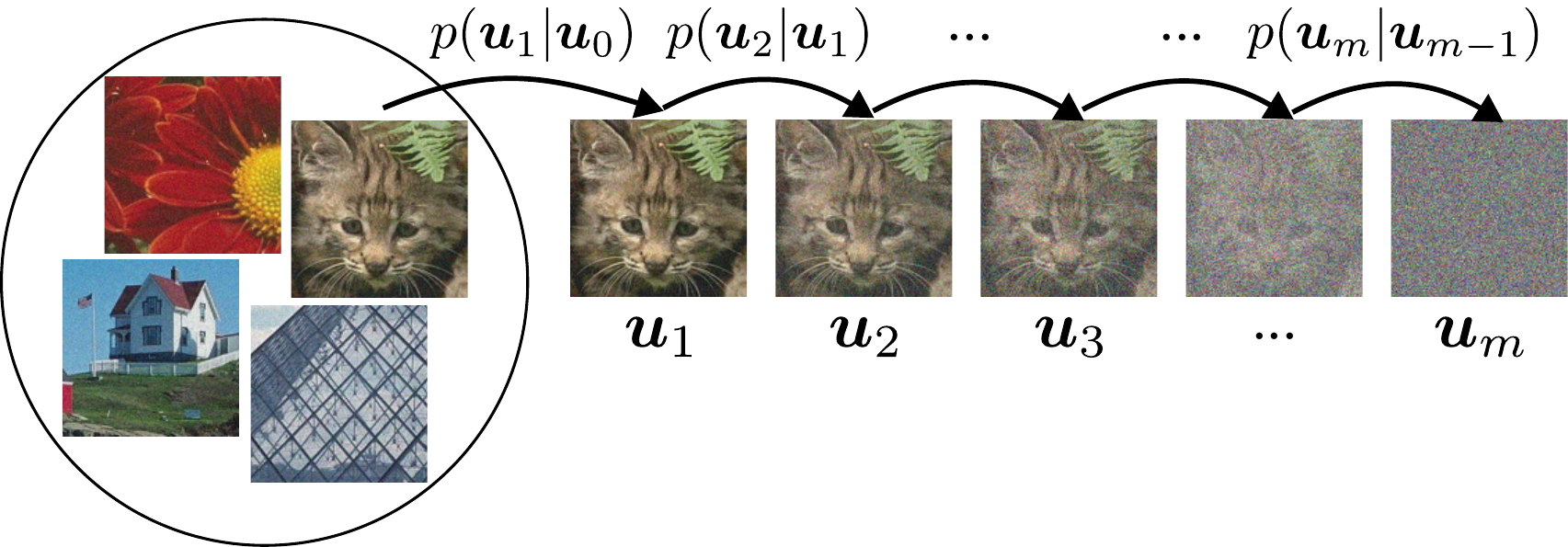}\\[2mm]
    \caption{\textbf{Forward DPM Trajectory.} Starting from each sample 
        of the initial distribution $p(\bm u_0)$, infinitely many trajectories 
        exist. In each step of the trajectory, noise is added according to the 
        transition probability $p(\bm u_i | \bm u_{i-1})$.
        \label{fig:trajectory}}
\end{figure*}

\subsection{The Forward Process}

In a first step, the so-called \emph{forward process}, diffusion 
    probabilistic models map from the target distribution to a simpler 
distribution. 
While there are many alternatives, the standard normal distribution 
$\mathcal{N}(\bm 0, \bm I)$ is a typical choice. 
Here, $\bm I \in \R^{n \times n}$ denotes the 
unit matrix. Thus, the forward process takes training images 
as an input and maps them to samples of multivariate Gaussian noise. These 
noise samples act as seeds for the \emph{reverse process}. Like other 
generative models such as generative adversarial networks~\cite{GPMX14}, it 
maps from samples of this simple distribution back to the approximate target 
distribution. These diffusion probabilistic models (DPMs) create image 
structures from pure noise.

In Section~\ref{sec:blurringdiffusion}, we also present alternatives to purely 
noise-based forward processes. Typically, the forward process is 
straightforward both conceptually and in its implementation. For practical 
tasks, a major challenge is the estimation of the corresponding reverse 
process, which is implemented with deep learning. 
Our focus lies mostly on a theoretical analysis of the forward process from a 
scale-space perspective. 
We also discuss the reverse process in Section~\ref{sec:ddpmbackward}, but due 
to its approximate nature, theoretical results are less comprehensive. The 
design of the forward process also has significant impact on the performance of 
the generative model~\cite{RHS23,HS22}. 
Thus, it also constitutes a more attractive direction for a scale-space focused 
investigation: Understanding the nature of existing forward processes might 
allow to carry over useful properties from existing classical scale-spaces.

Therefore, in Section~\ref{sec:probdiffscale}, we show that a wide variety of 
existing diffusion probabilistic models fulfil generalised 
scale-space properties. 
This new class of scale-spaces differs significantly from classical 
approaches, since it does not consider the evolution of images, but of 
probability distributions instead. First, we need to establish a mathematical 
definition of the probabilistic forward process on a time-dependent random 
variable $\bm U(t)$.

At time $t=0$, this random variable has the initial distribution $p(\bm 
F)$. For subsequent times $t_1 < t_2 < ... < t_m$ a  \emph{trajectory} is 
defined as a sequence of temporal realisations $\bm u_1, ..., \bm u_m$  of $\bm 
U(t)$. It represents one possible evolution according to random additions of 
noise in each time step and is visualised in Fig~\ref{fig:trajectory}. 
Importantly, each image $\bm u_i$ in a trajectory only 
depends on $\bm u_{i-1}$. This implies that the corresponding conditional 
transition probabilities fulfil the Markov property
\begin{align}
    \label{eq:markov_property}
    p(\bm u_i | \bm u_{i-1}, ..., \bm u_0) \;=\; p(\bm u_i | \bm u_{i-1}) \, . 
\end{align}
Here, we consider the probability of observing $\bm u_i$ as a realisation of 
$\bm U(t)$ at time $t_i$ given $\bm U(t_{i-1})=\bm u_{i-1}$.
Thus, the stochastic forward evolution is a Markov process \cite{Ga85} and we 
can write the probability density of the trajectory in terms of the transition 
probabilities from~\eqref{eq:markov_property} and the initial distribution 
$p(\bm u_0)=p(\bm F)$: 
\begin{align}
    \label{eq:density_fwdtrajectory}
    p(\bm u_0, ..., \bm u_m) \;=\; p(\bm u_0) \prod_{i=1}^m  p(\bm u_i | \bm 
    u_{i-1}) 
    \, .
\end{align}
This property is also integral to establishing central architectural properties 
of our generalised scale-space in Section~\ref{sec:probdiffscale}. In contrast 
to our earlier conference publication~\cite{Pe23}, we consider a more general 
transition probability than the original model of \citet{SMDH13}. Relying on 
the model of~\citet{KSBH21}, we use Gaussian distributions of the type
\begin{align}
    \label{eq:forwardtransition}
    p(\bm u_i | \bm  u_{i-1}) \;=\; \mathcal{N}\left(\alpha_i \, \bm 
    u_{i-1}, \,
    \beta_i^2 \bm I \right)\, . 
\end{align}
Since $\bm I \in \R^{n \times n}$ denotes the unit matrix, the covariance 
matrix of this multivariate Gaussian is diagonal. Thus, for every pixel $j$, 
we consider independent, identically distributed 
Gaussian noise with 
mean $\alpha_i u_{i-1,j}$ and standard deviation $\beta_i$. Overall, the 
forward process has the free parameters $\alpha_i > 0$ and $\beta_i \in (0,1)$. 
In 
practice, these parameters can be learned or chosen by a user. Often $\alpha_i$ 
is also defined as a function of $\beta_i$, which we discuss in more detail in 
Section~\ref{sec:probdiffscale}. 

\subsection{The Reverse Process}

\citet{SMDH13} motivate the reverse process by a partial differential 
equation (PDE) that is associated to the forward process. In particular, 
they rely on the results of \citet{Fe49}. These require the existence of  the 
stochastic moments
\begin{align}
    m_k(\bm u_t, t) \;=\; \lim_{h \rightarrow 0} \frac{1}{h} \int p(\bm u_{t+h} 
    , 
    \bm u_t) (\bm u_{t+h} - \bm u_t)^k \intd \bm u_{t+h}
\end{align}
with $k\in \{1,2\}$. Under this assumption, the probability density of 
the Markov process from Eq.~\eqref{eq:density_fwdtrajectory} is a solution of 
the partial differential equation
\begin{align}
    \label{eq:fellerfwd}
    \frac{\partial}{\partial t} p \;=\; \frac{1}{2} 
    \frac{\partial^2}{\partial \bm u_t \partial \bm u_t} \left(m_2(\bm u_t, t) 
    p(\bm u_\tau, \bm u_t) \right) 
    + 
    \frac{\partial}{\partial \bm 
        u_t} \left( m_1(\bm u_t, t) p(\bm u_\tau, \bm u_t) \right) \, .
\end{align}
Here, $p(\bm u_\tau, \bm u_t)$ denotes the probability density for a transition 
from $\bm u_\tau$ to $\bm u_t$ with $\tau < t$.
In Section~\ref{sec:osmosis}, we use the fact that 
Eq.~\eqref{eq:fellerfwd} is a drift-diffusion equation to discuss connections 
to osmosis filtering.

For practical purposes, it is important that Feller has proven that a solution 
of Eq.~\eqref{eq:fellerfwd} also solves the backward equation
\begin{align}
    \label{eq:fellerback}
    \frac{\partial}{\partial \tau} p \;=\; \frac{1}{2} m_2(\bm u_\tau, \tau) 
    \frac{\partial^2}{\partial \bm u_\tau \partial \bm u_\tau } p + 
    m_1(\bm u_\tau, \tau) \frac{\partial}{\partial \bm 
        u_\tau}  p \, .
\end{align}
Here, the backward perspective is obtained due to the exchange of roles of the 
earlier time $\tau$ with the later time $t$. \citet{SMDH13} exploit the close 
similarity of the backward equation to the forward equation. 
It implies that the reverse process from the normal distribution to the target 
distribution also has Gaussian transition probabilities. However, the mean and 
standard deviation are unknown and are estimated with a neural network instead. 
In particular, the training minimises the cross entropy to the target 
distribution $p(\bm F)$. We discuss the reverse process in more detail in 
Section~\ref{sec:ddpmbackward}.

The capabilities of diffusion probabilistic models go beyond merely using the 
reverse process to sample from the target distribution. 
Additionally, it is possible to condition this distribution with side 
information such as partial image information or textual descriptions of the 
image content. This is useful for restoring missing image parts with 
inpainting~\cite{RBLE+22,SMDH13} or for text-to-image models~\cite{RBLE+22}.
However, our main focus are theoretical properties of multiple different 
forward processes. Since the estimation of the parameters for the reverse 
process is not relevant for our contributions, we refer to  
\cite{HJA20,KSBH21,RBLE+22,SDME21} for more details. 


\section{Generalised Diffusion Probabilistic Scale-Spaces}
\label{sec:probdiffscale}

In our previous conference publication~\cite{Pe23}, we introduced scale-space 
properties for the original forward diffusion probabilistic model 
(DPM) of Sohl-Dickstein et al.~\cite{SMDH13}. We generalise these results in 
Section~\ref{sec:ddpmforward} to a wider variety of noise schedules. Moreover, 
we introduce a generalised scale-space theory for the corresponding backward 
direction in Section~\ref{sec:ddpmbackward}. Finally, we address the recent 
inverse heat dissipation~\cite{RHS23} and blurring diffusion models 
\cite{HS22} in Section~\ref{sec:blurringdiffusion}.

Before we discuss scale-space properties, we need to establish some 
preliminaries that allow us to rewrite transition probabilities in a 
useful way. The transition probabilities from 
Eq.~\eqref{eq:forwardtransition} allow us to express the random variable at 
time $t_i$ in terms of the random variable at time $t_{i-1}$ according to
\begin{align}
    \label{eq:ddpmforwardrandomvar}
    \bm U_i  \; = \; \alpha_i \, \bm U_{i-1} + \beta_i  \, \bm G \, .
\end{align}
Here, $\bm G$ denotes Gaussian noise from the standard normal distribution 
$\mathcal{N}(\bm 0, \bm I)$. This generalises the model of \citet{SMDH13} who 
use $\alpha_i = \sqrt{1-\beta_i^2}$. \citet{KSBH21} also investigate 
variance-exploding diffusion \cite{SE19} with $\alpha_i^2 = 1$. We discuss both 
types of models in Section~\ref{sec:ddpmforward}.

\medskip
\begin{proposition}[Transition Probability from the Initial Distribution] 
    We can directly transition from $\bm U_0$ to $\bm U_i$ by
    \begin{align}
        \label{eq:ddpmonestep}
        \bm U_i \; = \; \bm U_0 \prod_{\ell = 1}^{i} \alpha_\ell  + 
        \left(\sum_{k=1}^{i-1} \beta_k \prod_{\ell = k+1}^{i} 
        \alpha_\ell + \beta_i \right) \bm G
    \end{align}
\end{proposition}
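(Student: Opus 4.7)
The plan is to prove the formula by straightforward induction on $i$, unrolling the one-step recursion from Eq.~\eqref{eq:ddpmforwardrandomvar}.

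For the base case $i=1$, the product $\prod_{\ell=1}^{1} \alpha_\ell$ collapses to $\alpha_1$, and the sum $\sum_{k=1}^{0}$ is empty, so the claimed formula reduces to $\bm U_1 = \alpha_1 \bm U_0 + \beta_1 \bm G$, which is exactly Eq.~\eqref{eq:ddpmforwardrandomvar} for $i=1$.

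For the inductive step, I assume the claim for $i-1$ and substitute it into $\bm U_i = \alpha_i \bm U_{i-1} + \beta_i \bm G$. Multiplying through by $\alpha_i$ absorbs it into each existing $\alpha$-product, shifting the upper index of $\prod_{\ell=1}^{i-1}$ and of $\prod_{\ell=k+1}^{i-1}$ up to $i$. The leftover term $\alpha_i \beta_{i-1} \bm G$ from the tail of the previous sum can be rewritten as $\beta_{i-1}\prod_{\ell=i}^{i}\alpha_\ell\,\bm G$, which is precisely the new $k=i-1$ summand; this extends the summation range from $k\le i-2$ to $k \le i-1$. Adding the fresh contribution $\beta_i \bm G$ as the loose term then matches the target expression exactly.

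The main point of care, rather than a technical obstacle, is notational. Each invocation of the recursion introduces an \emph{independent} Gaussian innovation, so the single symbol $\bm G$ on the right-hand side is a placeholder for a sum of distinct noise draws $\bm G_k$. The scalar prefactor $\sum_{k=1}^{i-1}\beta_k\prod_{\ell=k+1}^i \alpha_\ell + \beta_i$ is therefore the correct coefficient if one treats the noise term as a formal sum $\sum_k c_k \bm G_k$; if instead one wishes to absorb all innovations into one equivalent standard Gaussian, the prefactor must be replaced by its Pythagorean counterpart $\sqrt{\sum_{k=1}^{i-1}\beta_k^2\prod_{\ell=k+1}^i \alpha_\ell^2 + \beta_i^2}$, since independent Gaussians combine by summed variances. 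Once this convention is fixed, the inductive telescoping is routine.
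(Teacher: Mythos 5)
Your induction is exactly the paper's argument: the same base case $i=1$ and the same unrolling of Eq.~\eqref{eq:ddpmforwardrandomvar}, distributing $\alpha_{i+1}$ over the induction hypothesis and absorbing the new $\beta_{i+1}\bm G$ term. What deserves emphasis is that your closing ``notational'' caveat is not pedantry but the substantively correct reading, and it goes beyond what the paper does. The forward process of Eq.~\eqref{eq:forwardtransition} draws an \emph{independent} Gaussian at every step, so as a statement about the transition probability $p(\bm u_i\mid\bm u_0)$ --- which is what the proposition is titled and how it is used in Eq.~\eqref{eq:ddpmonestepdensity} --- the standard deviation must be the Pythagorean combination $\bigl(\sum_{k=1}^{i}\beta_k^2\prod_{\ell=k+1}^{i}\alpha_\ell^2\bigr)^{1/2}$, not the linear sum $\gamma_i$ of Eq.~\eqref{eq:gamma}. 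The paper's own special case confirms your version: with $\alpha_j=\sqrt{1-\beta_j^2}$ one has
\begin{align}
    \sum_{k=1}^{i}\beta_k^2\prod_{\ell=k+1}^{i}\bigl(1-\beta_\ell^2\bigr) \;=\; 1-\prod_{j=1}^{i}\bigl(1-\beta_j^2\bigr)\,,
\end{align}
which is precisely the variance in Eq.~\eqref{eq:starttot}, whereas the linear sum does not reproduce it. The paper's proof obtains the linear coefficient only because it silently reuses the same symbol $\bm G$ for every step's innovation, i.e.\ it proves an identity of random variables under a convention that does not match the model; your root-sum-of-squares replacement is the statement one actually needs downstream (the later monotonicity and steady-state arguments survive with it, since the corrected $\gamma_i$ is still positive, increasing under the stated conditions, and bounded by the paper's $\gamma_i$). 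So: same approach, but your caveat identifies a genuine imprecision in the paper's formulation rather than a gap in your own argument.
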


\begin{proof}
    For $i=1$, the statement is fulfilled according to
    \begin{align}
        \bm U_{1} \;=\; \alpha_{1} \, \bm U_{0} + \beta_1  \, \bm G \, .
    \end{align}
    We prove the statement by induction. Applying the hypothesis for the step 
    from 
    $i$ to $i+1$, we obtain
    \begin{align}
        \bm U_{i+1} &\;=\; \alpha_{i+1} \, \bm U_{i} + \beta_{i+1}  \, \bm G \\
        &\;=\;  \alpha_{i+1} \Bigg( \bm U_0 \prod_{\ell = 1}^{i} \alpha_\ell  
        + \Bigg(\sum_{k=1}^{i-1} \beta_k \prod_{\ell = k+1}^{i} 
        \alpha_\ell + \beta_i \Bigg) \bm G \Bigg)+ \beta_{i+1} \bm G \\
        &\;=\; \bm U_0 \prod_{\ell = 1}^{i+1} \alpha_\ell  + 
        \left(\sum_{k=1}^{i} \beta_k \prod_{\ell = k+1}^{i+1} 
        \alpha_\ell + \beta_{i+1} \right) \bm G \, .
    \end{align}
\end{proof}
\noindent
Thereby, we have established the transition probability from time $0$ to time 
$t_i$ as
\begin{align}
    \label{eq:ddpmonestepdensity}
    p(\bm u_i | \bm u_0) = \mathcal{N}(\lambda_i \bm u_0, \gamma_i^2 \bm I) 
\end{align}
where the mean $\lambda_i$ and standard deviation $\gamma_i$ of this 
multivariate Gaussian distribution are
\begin{align}
    \label{eq:gamma}
    \lambda_i :=  \prod_{\ell = 1}^{i} \alpha_\ell \, , \qquad
    \gamma_i := \sum_{k=1}^{i-1} \beta_k \prod_{\ell = k+1}^{i} 
    \alpha_\ell + \beta_i \, .
\end{align}
An interesting special case of the proposition above arises for the 
parameter choice $\alpha_i = \sqrt{1-\beta_i^2}$ of the variance-preserving 
case \cite{SMDH13}. \citet{HJA20} have shown that under this condition, the 
transition probability becomes
\begin{align}
    \label{eq:starttot}
    p(\bm u_i | \bm  u_0) \;=\; \mathcal{N} \left(\sqrt{\prod_{j=1}^i (1 - 
        \beta_j^2)} \, 
    \bm u_0, \, \bm I - \prod_{j=1}^i (1-\beta_j^2) \bm I \right)\, . 
\end{align}
These insights are helpful for establishing a generalised scale-space 
theory for diffusion probabilistic models.

\begin{figure*}[t]
    \centering
    \includegraphics[width=0.8\linewidth]{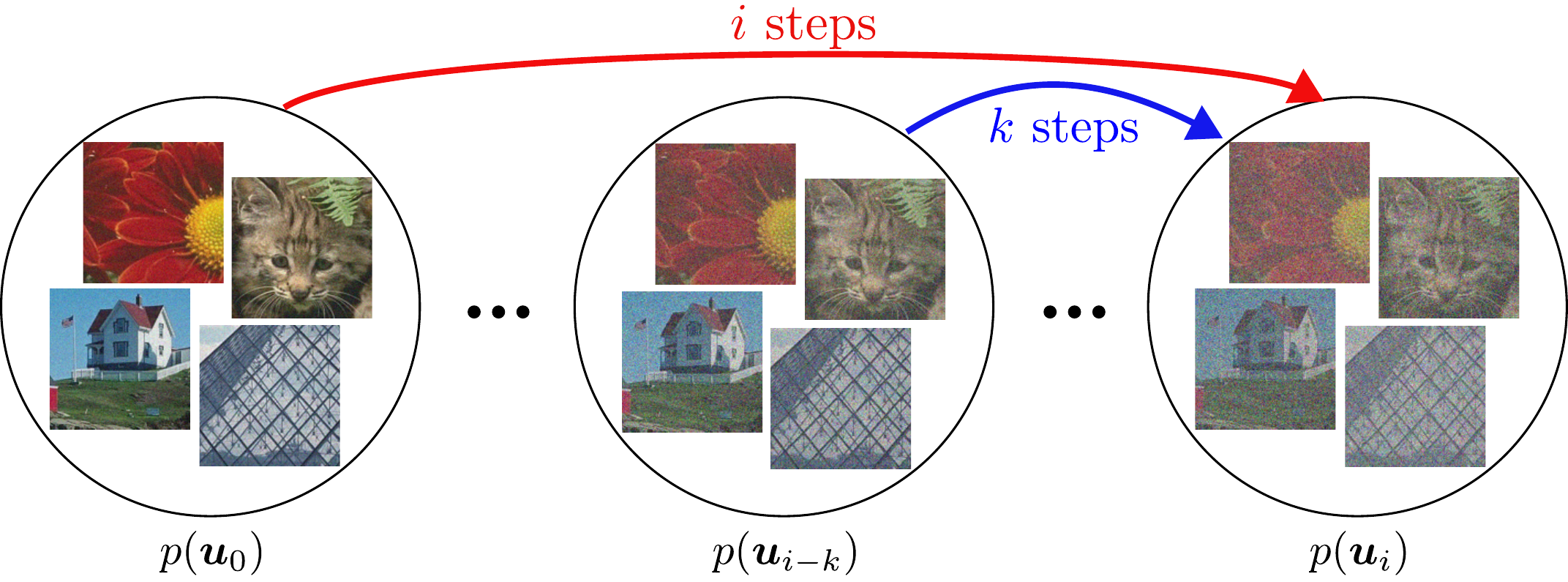}
    \caption{\textbf{Semigroup Property for Forward DPM.} Due to the Markov 
        property, each intermediate scale $i$ can be reached either from the 
        training distribution $p(\bm u_0)$ in $i$ steps, or from $p(\bm 
        u_{i-k})$ 
        in $k$ steps. Note that this property does not apply to individual 
        images 
        as in classical scale-spaces. Instead, it refers to probability 
        distributions  which are visualised by samples from four different 
        trajectories.
        \label{fig:semigroup}}
\end{figure*}

\subsection{Generalised-Scale-Space Properties for Forward DPM}
\label{sec:ddpmforward}

In the following, we propose central architectural properties for a generalised 
DPM scale-space and also discuss invariances. To this end, we consider the
sequence of the marginal distributions of the random variable $\bm U(t)$.
These can be obtained by integrating over all possible paths from the starting
distribution to scale $i$ according to 
\begin{align}
    p(\bm u_i) \;=\; \int p(\bm u_0, ..., \bm u_i) \,  
    \intd \bm u_0 \cdots \intd \bm u_{i-1} \, . 
\end{align}
Thus at each scale $i$, we consider the marginal distribution of $\bm U(t_i)$. 
Individual images are samples from the distributions at a given scale. 

\subsubsection*{Property 1: Initial State}
By definition, the initial distribution for the Markov process is the 
distribution $p(\bm F)$ of the training database. Thus, it also defines the 
initial state $p(\bm u_0)$ of the scale-space.


\subsubsection*{Property 2: Semigroup Property} 
One central architectural property of scale-spaces is the ability to 
recursively construct a coarse scale from finer scales, i.e. the path from the 
initial state can be split into intermediate scales. This concept has been 
already established by \citet{Ii62} in the pioneering works on Gaussian 
scale-spaces.
Intuitively, diffusion probabilistic models fulfil this property 
since they are Markov processes. The property is visualised in 
Fig.~\ref{fig:semigroup}.

\medskip
\begin{proposition}[Semigroup Property] 
    The distribution $p(\bm u_i)$ at scale $i$ can be reached equivalently in 
    $i$ 
    steps from $p(\bm u_0)$ or in $\ell$ steps from $p(\bm u_{i-\ell})$. 
\end{proposition}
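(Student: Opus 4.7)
The plan is to derive the statement from the Markov factorisation in~\eqref{eq:density_fwdtrajectory}, essentially via a Chapman--Kolmogorov-style argument, while being careful to phrase the equivalence at the level of marginal distributions of the random variable $\bm U(t)$ rather than individual trajectories.

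First, I would start from the marginal
\begin{align}
    p(\bm u_i) \;=\; \int p(\bm u_0, \dots, \bm u_i)\, \intd \bm u_0 \cdots \intd \bm u_{i-1},
\end{align}
insert the Markov factorisation~\eqref{eq:density_fwdtrajectory}, and split the product at index $i-\ell$ into two blocks: one containing $p(\bm u_0) \prod_{k=1}^{i-\ell} p(\bm u_k \mid \bm u_{k-1})$ and one containing $\prod_{k=i-\ell+1}^{i} p(\bm u_k \mid \bm u_{k-1})$. I would then reorder the integrations so that the variables $\bm u_0, \dots, \bm u_{i-\ell-1}$ are integrated out first. By the same marginalisation identity applied at scale $i-\ell$, this inner integral collapses to $p(\bm u_{i-\ell})$, leaving an expression that only involves the intermediate variables $\bm u_{i-\ell+1}, \dots, \bm u_{i-1}$ and the distribution $p(\bm u_{i-\ell})$.

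Next, I would define the $\ell$-step transition density
\begin{align}
    p(\bm u_i \mid \bm u_{i-\ell}) \;=\; \int \prod_{k=i-\ell+1}^{i} p(\bm u_k \mid \bm u_{k-1})\, \intd \bm u_{i-\ell+1} \cdots \intd \bm u_{i-1},
\end{align}
which is well-defined because the Markov property~\eqref{eq:markov_property} guarantees that each factor depends only on its immediate predecessor. Combining the two reductions yields
\begin{align}
    p(\bm u_i) \;=\; \int p(\bm u_i \mid \bm u_{i-\ell})\, p(\bm u_{i-\ell})\, \intd \bm u_{i-\ell},
\end{align}
which is precisely the statement that the marginal at scale $i$ can be obtained in $\ell$ steps starting from the marginal at scale $i-\ell$. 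Setting $\ell=i$ recovers the construction from $p(\bm u_0)$, so both descriptions refer to the same distribution.

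The argument is essentially bookkeeping and the main obstacle I expect is purely presentational: one has to make clear that the semigroup property holds for marginal distributions of $\bm U(t)$ and \emph{not} for individual image trajectories, since each sample from $p(\bm u_{i-\ell})$ gives rise to infinitely many continuations (cf.~Fig.~\ref{fig:semigroup}). If desired, one can additionally observe that for the Gaussian transitions~\eqref{eq:forwardtransition} the identity is consistent with~\eqref{eq:ddpmonestep}--\eqref{eq:gamma}, because the $\lambda_i$ and $\gamma_i$ coefficients obtained by composing $\ell$ Gaussian kernels match those obtained by composing $i$ kernels directly; this provides a useful sanity check but is not needed for the proposition itself.
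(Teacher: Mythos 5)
Your proposal is correct and follows essentially the same route as the paper: both insert the Markov factorisation of Eq.~\eqref{eq:density_fwdtrajectory} into the marginal and integrate out $\bm u_0,\dots,\bm u_{i-\ell-1}$ so that the expression collapses to $p(\bm u_{i-\ell})$ times the remaining product of one-step kernels. Your additional step of naming the $\ell$-step transition density and writing the Chapman--Kolmogorov form is just a more explicit presentation of the same marginalisation argument.
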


\begin{proof}
    The probability density of the forward trajectory is defined in a recursive 
    way in Eq.~\eqref{eq:density_fwdtrajectory}. Thus, we have to show that 
    this property also carries over to the marginal distributions of the 
    scale-space.
    We can reach $p(\bm u_{i})$ either directly from $\bm u_0$ or from an 
    intermediate scale $\ell$ by using the definition of the joint probability 
    density of the Markov process:
    \begin{align}
        p(\bm u_i) &\;=\;  \int p(\bm u_0) \prod_{j=1}^{i} p(\bm 
        u_j |\bm  u_{j-1}) \,  \intd \bm u_0 \cdots \intd \bm 
        u_{i-1} \, \\
        &\;=\;  \int p(\bm u_{i-\ell}) \prod_{j=i-\ell+1}^{i} p(\bm 
        u_j |\bm  u_{j-1}) \,  \intd \bm u_{i-\ell} \cdots \intd \bm 
        u_{i-1} \, . 
    \end{align}
\end{proof}


\subsubsection*{Property 3: Lyapunov Sequences} 
In classical scale-spaces (e.g. with diffusion),  Lyapunov sequences quantify 
the change in the evolving image with increasing scale parameter. They 
constitute a  measure of image simplification~\cite{We97} in terms of monotonic 
functions. In practice, they often represent the information content of an 
image at a given scale. Here, we define a Lyapunov sequence on the evolving 
probability density instead.

To this end, we consider the conditional entropy of the random variable $\bm 
U_i$ at time $t_i$ given the random variable $\bm U_0$.
It constitutes a measure for the gradual removal 
of the image information from the initial distribution $p(\bm u_0)$.
\medskip

\begin{proposition}[Increasing Conditional Entropy]  The conditional entropy
    \label{prop:conditionalentropy}
    \begin{align}
        \label{eq:condentropy}
        H_p(\bm U_i | \bm U_0) &\;=\; - \iint p(\bm u_i, \bm u_0) \ln p(\bm 
        u_i | \bm u_0) \, \intd \bm u_0 \intd \bm u_i  \, .
    \end{align}
    increases with $i$ under the assumption $\beta_j \in 
    (0,1)$ for all $j$ with $\beta_{j+1} \geq (1-\alpha_{j+1}) \gamma_j$ with 
    $\gamma_j$ as defined in Eq.~\eqref{eq:gamma}. 
\end{proposition}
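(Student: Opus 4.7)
The plan is to reduce the conditional entropy to a simple closed form using Proposition 1, and then verify the monotonicity condition by expanding the recursion for $\gamma_i$.

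First I would exploit the fact, established in Eq.~\eqref{eq:ddpmonestepdensity}, that $p(\bm u_i \mid \bm u_0) = \mathcal{N}(\lambda_i \bm u_0, \gamma_i^2 \bm I)$ is a multivariate Gaussian whose covariance $\gamma_i^2 \bm I$ does \emph{not} depend on $\bm u_0$. Substituting this into Eq.~\eqref{eq:condentropy} and using the standard expression for the differential entropy of an isotropic Gaussian in $n$ dimensions yields
\begin{align}
H_p(\bm U_i \mid \bm U_0) \;=\; \int p(\bm u_0) \left[ -\int p(\bm u_i \mid \bm u_0) \ln p(\bm u_i \mid \bm u_0) \intd \bm u_i\right] \intd \bm u_0 \;=\; \tfrac{n}{2} \ln\!\left(2\pi \conste \, \gamma_i^2 \right) ,
\end{align}
since the inner integral is constant in $\bm u_0$ and $p(\bm u_0)$ integrates to one. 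Thus monotonicity of $H_p(\bm U_i \mid \bm U_0)$ in $i$ is equivalent to monotonicity of $\gamma_i$.

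Next I would derive a one-step recursion for $\gamma_i$ directly from its definition in Eq.~\eqref{eq:gamma}. Splitting off the last factor $\alpha_{i+1}$ from the product in the definition of $\gamma_{i+1}$ gives
\begin{align}
\gamma_{i+1} \;=\; \alpha_{i+1}\!\left(\sum_{k=1}^{i-1} \beta_k \prod_{\ell=k+1}^{i} \alpha_\ell + \beta_i\right) + \beta_{i+1} \;=\; \alpha_{i+1}\, \gamma_i + \beta_{i+1}.
\end{align}
The condition $\gamma_{i+1} \geq \gamma_i$ then rearranges immediately to $\beta_{i+1} \geq (1-\alpha_{i+1})\,\gamma_i$, which is exactly the hypothesis of the proposition. (Note that for $\alpha_{i+1}\geq 1$ the right-hand side is non-positive and the condition holds trivially since $\gamma_i,\beta_{i+1}>0$.) This establishes $\gamma_{i+1} \geq \gamma_i$, and hence $H_p(\bm U_{i+1} \mid \bm U_0) \geq H_p(\bm U_i \mid \bm U_0)$ by monotonicity of the logarithm.

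There is no real obstacle here; the only subtlety worth checking carefully is that the covariance of $p(\bm u_i \mid \bm u_0)$ is genuinely independent of the conditioning variable, so that the conditional entropy collapses to a deterministic function of $\gamma_i$ alone. Once that is in hand, the proof becomes a one-line manipulation of the explicit formula for $\gamma_i$ followed by an application of the assumed inequality on $\beta_{j+1}$.
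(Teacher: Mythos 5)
Your proposal is correct and follows essentially the same route as the paper's own proof: collapse the conditional entropy to the differential entropy of the Gaussian $p(\bm u_i \mid \bm u_0)$ with covariance $\gamma_i^2 \bm I$ independent of $\bm u_0$, derive the recursion $\gamma_{i+1} = \alpha_{i+1}\gamma_i + \beta_{i+1}$ from Eq.~\eqref{eq:gamma}, and rearrange $\gamma_{i+1} \geq \gamma_i$ into the assumed inequality $\beta_{i+1} \geq (1-\alpha_{i+1})\gamma_i$. The only difference is cosmetic: the paper additionally verifies the condition explicitly for the variance-preserving schedule $\alpha_i = \sqrt{1-\beta_i^2}$, which your argument omits but which is not needed for the general statement.
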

\begin{proof}
    We can reduce the problem of showing that the conditional entropy is 
    monotonically increasing to a statement on the differential entropy of 
    $p(\bm 
    u_i | \bm u_0)$ since
    \begin{align}
        H_p(\bm U_i | \bm U_0)
        &\;=\; \int p(\bm u_0)  \Big(\underbrace{- \int p(\bm u_i | \bm u_0) 
            \ln p(\bm u_i | \bm u_0) \intd \bm u_i}_{=: H_p(\bm W_i)} \Big) 
        \intd \bm  u_0  \geq H_p(\bm U_{i-1} | \bm U_0) \\
        & \Leftrightarrow  H_p(\bm W_i) \geq H_p(\bm W_{i-1}) \, .
    \end{align}
    According to Eq.~\eqref{eq:ddpmonestepdensity}, $\bm W_i$ 
    is from $\mathcal{N}(\lambda_i \bm u_0, \gamma_i^2 \bm I)$. Therefore, the 
    entropy of 
    $\bm W_i$ only depends on the covariance matrix $\gamma_i \bm I$ and yields
    \begin{align}
        H_p(\bm W_i) \;=\; \frac{1}{2} \ln\Big(\big(2\pi \conste \, 
        \gamma_i^2\big)^n\Big) \, .
    \end{align}
    Thus, the entropy is increasing if $\gamma_{i+1} \geq \gamma_i$. 
    Furthermore, 
    due to Eq.~\eqref{eq:gamma} we have 
    \begin{align}
        \label{eq:gammarecursive}
        \resizebox{.925\hsize}{!}{$\displaystyle{
                \gamma_{i+1} = \sum_{k=1}^{i} \beta_k \prod_{\ell 
                    = k+1}^{i+1} \hspace{-2mm}
                \alpha_\ell + \beta_{i+1} 
                = \alpha_{i+1} \left(\sum_{k=1}^{i-1} \beta_k \prod_{\ell = 
                k+1}^{i} 
                \hspace{-2mm}  \alpha_\ell + \beta_i\right) + \beta_{i+1} = 
                \alpha_{i+1} 
                \gamma_i + 
                \beta_{i+1} \,
                .}$}
    \end{align}
    Since $\beta_i > 0$ and $\gamma_i > 0$, we require 
    \begin{align}
        \alpha_{i+1} \gamma_i + \beta_{i+1} \geq \gamma_i \quad \Leftrightarrow 
        \quad 
        \beta_{i+1} 
        \geq (1-\alpha_{i+1}) \gamma_i \, .
    \end{align}
    Again, we can also consider  $\alpha_i = \sqrt{1-\sigma_i}$, $\beta_i = 
    \sqrt{\sigma_i}$ as in \cite{SMDH13}. This gives us more concrete 
    expressions 
    for $\gamma_i$ according to Eq.~\eqref{eq:starttot}. With this we obtain
    \begin{align}
        \gamma_{i+1} = \sqrt{1 - \prod_{j=1}^{i+1} (1-\beta_j^2)} = \sqrt{1 - 
            (1-\beta_{i+1}^2) 
            \prod_{j=1}^{i}  (1-\beta_j^2)} > \gamma_i \, .
    \end{align}
    Since $\beta_{i+1} \in (0,1)$, this holds without further conditions with 
    the 
    noise schedule of Sohl-Dickstein et al.~\cite{SMDH13}.
\end{proof}


\subsubsection*{Property 4: Permutation Invariance} 
The 1-D drift diffusion process acts independently on each image pixel. 
Therefore, the spatial configuration of the pixels does not matter for the 
process. In the following we provide formal arguments for a permutation 
invariance of all distributions created by the trajectories of the 
drift-diffusion. 

Let $P(\bm f)$ denote a permutation function that arbitrarily shuffles the 
position of the pixels in the image $\bm f$ from the initial database. In 
particular, such permutations also include cyclic translations as well as 
rotations by $90^\circ$ increments.

\medskip

\begin{proposition}[Permutation Invariant Trajectories] 
    Let $\bm u_0$ denote an image from the initial distribution and $\bm v_0 := 
    P(\bm u_0)$ its permutation. Then, any trajectory $\bm v_0, ... \bm v_m$
    obtained from the process in Eq.~\eqref{eq:ddpmforwardrandomvar} is given by
    $\bm v_i = P(\bm u_i)$ for a trajectory $\bm u_0, ..., \bm u_m$ starting 
    with the original image $\bm u_0$.
\end{proposition}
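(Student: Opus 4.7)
The plan is to proceed by induction on the scale index $i$, exploiting the linearity of permutations together with the orthogonal invariance of the isotropic Gaussian noise $\bm G \sim \mathcal{N}(\bm 0, \bm I)$. Formally, I would identify the permutation $P$ with its corresponding permutation matrix $\bm P \in \R^{n \times n}$, which is orthogonal and therefore satisfies $\bm P \bm P^\top = \bm I$. This representation makes it immediate that the action of $P$ commutes with scalar multiplication and addition, which is the only algebraic feature needed to push $P$ through the update rule in Eq.~\eqref{eq:ddpmforwardrandomvar}.

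For the base case $i=0$, the claim $\bm v_0 = P(\bm u_0)$ holds by assumption. For the inductive step, I would assume $\bm v_{i-1} = P(\bm u_{i-1})$ and consider an arbitrary step of the trajectory
\begin{align*}
\bm v_i \;=\; \alpha_i \, \bm v_{i-1} + \beta_i \, \bm G_v \, ,
\end{align*}
where $\bm G_v$ is the noise realisation that produced this particular $\bm v_i$. Inserting the induction hypothesis and using linearity of $P$ yields
\begin{align*}
\bm v_i \;=\; P\bigl(\alpha_i \, \bm u_{i-1}\bigr) + \beta_i \, \bm G_v \;=\; P\bigl(\alpha_i \, \bm u_{i-1} + \beta_i \, P^{-1}(\bm G_v)\bigr) \, .
\end{align*}
Defining $\bm G_u := P^{-1}(\bm G_v)$ and $\bm u_i := \alpha_i \bm u_{i-1} + \beta_i \bm G_u$, we obtain $\bm v_i = P(\bm u_i)$ and thereby close the induction.

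The only step requiring genuine care is verifying that the sequence $\bm u_0, \ldots, \bm u_m$ constructed in this way is itself a valid realisation of the forward process in Eq.~\eqref{eq:ddpmforwardrandomvar}, i.e.\ that the induced noise samples $\bm G_u = P^{-1}(\bm G_v)$ have the correct distribution. This is where orthogonality of $\bm P$ enters decisively: since $\bm P^{-1} = \bm P^\top$ and $\bm G_v \sim \mathcal{N}(\bm 0, \bm I)$, the transformed variable $P^{-1}(\bm G_v)$ is distributed as $\mathcal{N}(\bm 0, \bm P^\top \bm P) = \mathcal{N}(\bm 0, \bm I)$. Hence each $\bm G_u$ is again an admissible sample of the standard normal noise, and the resulting $\bm u$-trajectory has the same joint law as any other trajectory started at $\bm u_0$. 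I expect this distributional bookkeeping to be the main subtlety; the remainder is pure linear-algebraic manipulation.
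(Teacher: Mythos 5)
Your proof is correct and follows essentially the same route as the paper: induction on the step index, pushing the permutation through the linear update in Eq.~\eqref{eq:ddpmforwardrandomvar} and observing that the back-permuted noise $P^{-1}(\bm G_v)$ is again a sample from $\mathcal{N}(\bm 0, \bm I)$. Your explicit justification via orthogonality of the permutation matrix merely spells out the distributional claim that the paper states without elaboration.
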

\begin{proof}
    Consider the transition of $\bm v_{i-1}$ to $\bm v_i$ via a realisation 
    $\bm g_i$ of the random variable $\bm G$ in 
    Eq.~\eqref{eq:ddpmforwardrandomvar}.
    Then $\tilde{\bm g}_i := P^{-1}(\bm g_i)$ is also from the distribution 
    $\mathcal{N}(\bm 0, \bm I)$ and we can obtain  $\bm u_i$ from $\bm u_{i-1}$
    by using this permuted transition noise.
    Since $\bm v_0 = P(\bm u_0)$ holds by definition, we can inductively show 
    the claim by considering
    \begin{equation}
        \bm v_i \;=\; \alpha_i \bm v_{i-1} + \beta_i \bm g_i  
        \;=\; 
        \alpha_i P(\bm u_{i-1}) + \beta_i P(\tilde{\bm g}_i) 
        \;=\; 
        P(\bm u_i) \, .
    \end{equation}
\end{proof}

Any permutation $P$ is a bijection. Thus every trajectory from a permuted image
corresponds exactly to one trajectory starting from the original image. This 
directly implies permutation invariance of the corresponding distributions.

\medskip

\begin{corollary}[Permutation Invariant Distributions] 
    Let $\bm u_0$ denote an image from the initial distribution and $\bm v_0 := 
    P(\bm u_0)$ its permutation. Then, for any image $\bm v$ from $p(\bm v_i)$
    there exists exactly one image $\bm u$ from $p(\bm u_i)$ such that $\bm v = 
    P(\bm u)$.
\end{corollary}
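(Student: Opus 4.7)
The plan is to lift the trajectory-level bijection of the preceding proposition to a bijection at the level of marginal distributions, exploiting that $P$ is a bijection on $\R^{n_x n_y n_c}$. The argument decomposes naturally into two steps: establishing the one-to-one correspondence between trajectories starting from $\bm u_0$ and those starting from $\bm v_0 = P(\bm u_0)$, and then integrating to obtain the statement on the marginals at scale $i$.

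First, I would invoke the preceding proposition to identify, for any trajectory $\bm v_0, \ldots, \bm v_m$ starting at $\bm v_0 = P(\bm u_0)$, a corresponding trajectory $\bm u_0, \ldots, \bm u_m$ satisfying $\bm v_j = P(\bm u_j)$ for every $j$. Uniqueness of this trajectory is immediate from the bijectivity of $P$ via $\bm u_j = P^{-1}(\bm v_j)$, and the proposition guarantees conversely that every trajectory from $\bm u_0$ produces, under coordinate-wise application of $P$, a valid trajectory from $\bm v_0$. Thus trajectories are paired one-to-one.

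Second, I would translate this trajectory-level bijection into the corresponding statement on the marginals $p(\bm u_i)$ and $p(\bm v_i)$. Writing both marginals as integrals of the joint trajectory density from Eq.~\eqref{eq:density_fwdtrajectory} and applying the change of variables $\bm u_j = P^{-1}(\bm v_j)$ for each $j$, I would use that a pixel permutation is an orthogonal linear map with $|\det P| = 1$, so the substitution is measure-preserving. This yields $p(\bm v_i)(\bm v) = p(\bm u_i)(P^{-1}(\bm v))$, from which the claim follows: given any $\bm v$ drawn from $p(\bm v_i)$, the unique preimage $\bm u := P^{-1}(\bm v)$ lies in $p(\bm u_i)$ and satisfies $\bm v = P(\bm u)$ by construction.

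The main obstacle here is not analytic but presentational: the corollary is essentially a formal consequence of the preceding proposition combined with the bijectivity of $P$, and the only step requiring genuine care is verifying that the pixel permutation has unit Jacobian determinant at the density level. This ensures that the trajectory-level pairing lifts to a bona fide identification of the marginal distributions rather than merely a correspondence between their supports, which in turn guarantees both existence and uniqueness of the preimage asserted in the statement.
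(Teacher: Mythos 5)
Your argument is correct and follows essentially the same route as the paper, which likewise deduces the corollary directly from the trajectory-level bijection of the preceding proposition together with the bijectivity of $P$. The only difference is that you make explicit the measure-preserving change of variables (unit Jacobian of the permutation) yielding $p(\bm v_i)(\bm v) = p(\bm u_i)(P^{-1}(\bm v))$, a detail the paper leaves implicit but which is fully consistent with its reasoning.
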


\subsubsection*{Property 5: Steady State}
The steady state distribution for $i \rightarrow \infty$ is a multivariate 
Gaussian distribution $\mathcal{N}(\bm 0, \beta^2 \bm I)$ with mean $\bm 0$ and 
a covariance matrix~$\beta^2 \bm I$ with $\beta < 2$. Convergence  to a 
noise 
distribution is a cornerstone of diffusion probabilistic models and thus 
well-known. For the sake of completeness we provide formal arguments for this 
property. Moreover, we verify that for the noise schedule of Sohl-Dickstein 
et al.~\cite{SMDH13}, we obtain $\beta=1$. Thus, we verify that their steady 
state is the normal distribution. 

\medskip
\begin{proposition}[Convergences to a Normal Distribution] 
    \label{prop:ddpmsteady}
    Let $\alpha_i$ and $\beta_i$ be bounded from above by 
    $a, b \in (0,1)$, i.e. $\alpha_i \in (0,a]$ and $\beta_i \in (0,b]$. 
    Moreover, let the assumptions of 
    Proposition~\ref{prop:conditionalentropy} be 
    fulfilled. 
    Then, for $i \rightarrow \infty$, the forward process from 
    Eq.~\eqref{eq:density_fwdtrajectory} converges to a normal distribution 
    $\mathcal{N}(\bm 0, \gamma^2 \bm I)$ with $\gamma \leq 
    \frac{b}{1-a}$.
\end{proposition}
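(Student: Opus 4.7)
The plan is to work directly with the closed-form transition density from Eq.~\eqref{eq:ddpmonestepdensity}, which tells us that $\bm U_i \mid \bm U_0 \sim \mathcal{N}(\lambda_i \bm u_0, \gamma_i^2 \bm I)$, and to show that as $i \to \infty$ the mean factor $\lambda_i$ vanishes while the variance factor $\gamma_i^2$ converges to some $\gamma^2$ with $\gamma \leq \frac{b}{1-a}$. Because the assumptions of Proposition~\ref{prop:conditionalentropy} are in force, we already know $(\gamma_i)_i$ is monotonically increasing, so it suffices to produce a uniform upper bound and to handle the marginalisation over $\bm u_0$.

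First, I would bound the mean factor: since $\alpha_\ell \in (0,a]$, clearly
\begin{equation}
    \lambda_i \;=\; \prod_{\ell=1}^{i} \alpha_\ell \;\leq\; a^i \;\xrightarrow{i\to\infty}\; 0.
\end{equation}
Next, I would bound the variance factor using the definition of $\gamma_i$ in Eq.~\eqref{eq:gamma} and the assumptions $\alpha_\ell \leq a$ and $\beta_k \leq b$:
\begin{equation}
    \gamma_i \;=\; \sum_{k=1}^{i-1} \beta_k \prod_{\ell=k+1}^{i} \alpha_\ell \;+\; \beta_i
    \;\leq\; b \sum_{k=1}^{i-1} a^{\,i-k} + b
    \;=\; b \sum_{j=0}^{i-1} a^{\,j}
    \;\leq\; \frac{b}{1-a}.
\end{equation}
Combining monotonicity from Proposition~\ref{prop:conditionalentropy} with this uniform bound, $(\gamma_i)_i$ is a bounded monotone sequence, hence converges to some limit $\gamma \in (0, \tfrac{b}{1-a}]$.

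To pass from the conditional transition to the marginal $p(\bm u_i)$, I would invoke Eq.~\eqref{eq:ddpmforwardrandomvar} iterated to the one-step form, which by the proposition at the start of Section~\ref{sec:probdiffscale} yields $\bm U_i = \lambda_i \bm U_0 + \gamma_i \bm G$ with $\bm G \sim \mathcal{N}(\bm 0, \bm I)$ independent of $\bm U_0$. Because images live in a bounded pixel range, $\bm U_0$ is tight, so $\lambda_i \bm U_0 \to \bm 0$ in probability while $\gamma_i \bm G \to \gamma \bm G$ in distribution; by Slutsky's theorem $\bm U_i$ converges in distribution to $\mathcal{N}(\bm 0, \gamma^2 \bm I)$.

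The only delicate step is the last one, since the conditional Gaussian depends on $\bm u_0$ and one has to be careful that the marginal distribution indeed converges to a pure Gaussian rather than a mixture with a small residual mean. The tightness of $\bm U_0$ (or, more formally, the existence of its first moment) is what allows $\lambda_i \bm U_0$ to vanish uniformly enough, and this is where I expect the argument to need the most care; the mean and variance bounds themselves are straightforward geometric-series estimates. Finally, for the variance-preserving choice $\alpha_i = \sqrt{1-\beta_i^2}$ of~\cite{SMDH13}, I would remark that Eq.~\eqref{eq:starttot} immediately yields $\gamma_i \to 1$, confirming $\gamma = 1$ in this special case.
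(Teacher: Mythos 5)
Your proof follows essentially the same route as the paper's: the decomposition $\bm U_i = \lambda_i \bm U_0 + \gamma_i \bm G$ from Eq.~\eqref{eq:ddpmonestep}, the observation that $\lambda_i \leq a^i \to 0$, and the geometric-series bound $\gamma_i \leq \frac{b}{1-a}$ combined with the monotonicity of $\gamma_i$ from Proposition~\ref{prop:conditionalentropy}. Your extra care at the end (bounded monotone convergence of $\gamma_i$ and the Slutsky/tightness argument for the marginal of $\bm U_i$) is a mild tightening of the paper's more informal conclusion that ``every trajectory converges to $\gamma \bm G$'', but it is the same argument in substance.
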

\begin{proof}
    According to Eq.~\eqref{eq:ddpmonestep} and Eq.~\eqref{eq:gamma}, we have
    \begin{align}
        \bm U_{i} \;=\; \lambda_i \bm U_0 + \gamma_i \bm G
    \end{align}
    with $\bm G$ from $\mathcal{N}(\bm 0, \bm I)$. For $i \rightarrow \infty$, 
    we can immediately conclude $\lambda_i \rightarrow 0$ for the 
    mean of the steady state distribution since it is the product of $i$ 
    factors $\alpha_\ell < 1$. 
    
    Under the assumptions of  Proposition~\ref{prop:conditionalentropy}, we 
    have already shown that $\gamma_i$ is increasing. Now let us consider 
    the boundedness of $\gamma_i$, starting with definition 
    Eq.~\eqref{eq:gamma} and using the assumptions $0<\alpha_i \leq a$ and 
    $0<\beta_i \leq b$:
    \begin{align}
        \ \gamma_i &= \sum_{k=1}^{i-1} \beta_k \prod_{\ell = k+1}^{i} 
        \alpha_\ell + \beta_i 
        \; \leq \; \sum_{k=1}^{i-1}  b \prod_{\ell = k+1}^{i} 
        a + b \\
        &= b \left(\sum_{k=1}^{i-1} \prod_{\ell = k+1}^{i}a + 1 \right)  
        \; = \; b \sum_{k=0}^{i-1}  a^k \;  \xrightarrow{i \to \infty} \;  
        \frac{b}{1-a} \, .
    \end{align}  
    Overall, this shows that for $i \to \infty$, every trajectory converges 
    to 
    $\gamma \bm G$ with a $\gamma \leq \frac{b}{1-a}$ and
    $\bm G$ from $\mathcal{N}(\bm 0, \bm I)$.
\end{proof}

We have only specified an upper bound for the variance $\gamma^2$ of 
the steady state distribution so far. The original DPM model \cite{SMDH13} 
with $\alpha_i = \sqrt{1-\beta_i^2}$ does not only act as an example that
verifies reasonable parameter choices are possible under the assumptions of 
our proposition. Additionally, we can also explicitly  infer that 
$\gamma=1$ in this special case.
Due to 
\begin{align}
    \gamma_i \;=\; \sqrt{1-\prod_{j=1}^{i} (1-\beta_j^2)}
\end{align}
and $0 < 1-\beta_j^2 < 1$, we obtain $\gamma_i \xrightarrow{i \to \infty} 
1$. 
Thus, the original DPM model convergences to the standard normal distribution.

Note that the variance-exploding model is not covered by the steady state 
criterion without additional assumptions. Due to the parameter choice 
$\alpha_\ell = 1$, the sequence 
$\gamma_i$ is given by
\begin{align}
    \gamma_i = \sum_{k=1}^{i} \beta_k \, .
\end{align}
As the name of the model suggests, the variance is thus not necessarily 
bounded 
for $i \to \infty$. For special choices of $\beta_i$, e.g. $\beta_i = 
\beta_0^{i-1}$ with $\beta_0 = 1$, we can however still get convergence to 
a 
normal distribution with a fixed variance. In the case of the example it 
would 
be $(1-\beta_0)^{-1}$.
Also note that due to $\alpha_\ell=1$ in Eq.~\eqref{eq:gamma}, 
$\lambda_i=1$ for all $i$. Thus, the mean remains constant in this case.

The noisy steady state marks a clear 
difference to traditional scale-spaces. For instance, diffusion scale-spaces on 
images~\cite{We97} converge to a flat steady state instead. However, the 
new class of diffusion probabilistic scale-spaces still underlies the 
same core concept: 
It removes information from the initial state recursively and 
hierarchically, leading to a state of minimal information w.r.t. the initial 
distribution.

\subsection{Generalised Scale-Space Properties for Reverse DPM}
\label{sec:ddpmbackward}

\citet{SMDH13} argue via results of \citet{Fe49} that for infinitesimal 
$\beta_t$, the distribution of forward and reverse trajectories becomes 
identical. However, these results are also tied in an inverse proportional way 
to the length of the trajectory. For arbitrary small $\beta_t$, the number 
of steps goes to infinity. For such a case of identical distributions, our 
results for the forward process would carry over to the reverse process: 
Initial and steady state are swapped and our Lyapunov sequences are 
decreasing instead of increasing. The remainder of the properties carry over 
verbatim.

In practice, however, the time-discrete reverse process used for DPMs is 
an approximation. Neural networks estimate the 
parameters for this reverse process. In the following, we comment on properties 
that can be established under these conditions.

To this end, we consider the reverse process of \citet{SMDH13} and denote its 
distributions with $q$. It takes the 
normal distribution $\mathcal{N}(\bm 0, \bm I)$ as a starting distribution 
$q(\bm u_M)$. Transitions in the reverse 
direction from $t_i$ to $t_{i-1}$ fulfil
\begin{align}
    \label{eq:reversetransition}
    q(\bm  u_{i-1} | \bm u_i) \;=\; \mathcal{N}\left(\bm \mu(\bm u_i, i), \bm 
    \Sigma (\bm u_i, i) \right) \, .
\end{align}
While these learned distributions are still Gaussian, they are significantly 
more complex than in the forward process. Both the learned mean and variance do 
not reduce to common scalars for all pixels. Furthermore, they depend on the 
current time step and on the image $\bm u_i$ itself. Therefore, we can 
establish less properties for the reverse process than before, but central 
ones still carry over.


\subsubsection*{Property 1: Normal Distribution as Initial State}
By definition, the distribution $q(\bm u_M)$ at time $t_M$ is given by 
$\mathcal{N}(\bm 0, \bm I)$.

\subsubsection*{Property 2: Semigroup Property}
The distribution $q(\bm u_i)$ at time $t_i$, $i<M$ can be reached in $M-i$ 
steps from $p(\bm u_M)$ or in $M-\ell-i$ steps from $p(\bm u_{M-\ell})$ 
with $M - \ell > i$. 
Since the Markov property is fulfilled, the proof for the semigroup property is 
analogous to the forward case in Section~\ref{sec:ddpmforward}. 

\subsubsection*{Property 3: Lyapunov Sequence}
As a byproduct of their derivation of conditional bounds for the reverse 
process, \citet{SMDH13} have already concluded that both the entropy $H_q(\bm 
u_i)$ and the conditional entropy $H_q(\bm u_0 | \bm u_i)$ are decreasing for 
the backward direction, i.e. 
\[
H_q(\bm u_{i-1}) \leq H_q(\bm u_{i}) \, .
\]
This is plausible, since the evolution starts with noise, a state of maximum 
entropy, and sequentially introduces more structure to it.

\subsubsection*{Property 4: Steady State} 
DPM reverse processes have the goal to enable sampling from the 
unknown initial distribution $p(\bm u_0)$. Convergence to this distribution is 
only guaranteed for the ideal case with identical distributions $p$ and $q$. 
However, even if this is not strictly fulfilled, the parameters $\bm 
\mu$ and $\bm \Sigma$ are chosen such that they maximise a lower bound of the 
log likelihood 
\begin{align}
    \int q(\bm u_0) \log p(\bm u_0) \, \intd \bm u_0 \, . 
\end{align}
In this sense, the reverse process approximates the distribution 
of the training data at time $t=0$.

\medskip

Property 4 from Section~\ref{sec:ddpmforward}, the permutation invariance, does 
in general not apply to the reverse process. Since the parameters $\bm \mu$ 
and $\bm \Sigma$ depend on the previous steps $\bm u_i$ of the trajectory, the 
configuration of the pixels matters. Invariances will only be present if the 
network that estimates the parameters enforces them in its architecture.

\subsection{Generalised Scale-Space Properties for Blurring Diffusion}
\label{sec:blurringdiffusion}

Inverse heat dissipation~\cite{RHS23} and blurring diffusion~\cite{HS22} models
do not solely rely on adding noise in order to destroy features of the original 
image. Instead, they combine it with a deterministic homogeneous diffusion 
filter~\cite{Ii62} to gradually blur this image. Such diffusion filters are 
well-known as the origin of scale-space theory~\cite{Ii62} and also constitute 
a special case of the osmosis filters we consider in more detail in 
Section~\ref{sec:osmosis}. First, we discuss equivalent formulations of 
blurring diffusion in the spatial and transform domain~\cite{RHS23,HS22}. These 
allow us to transfer our scale-space results for DPM 
to this new setting.

A discrete linear diffusion operator can be interpreted either as the 
discretisation of a continuous time evolution described by a partial 
differential equation (see also Section~\ref{sec:osmosis}) or as a Gaussian 
convolution. In the following we consider only greyscale images with $N=n_x 
n_y$ pixels. Colour images can be processed by filtering each channel 
separately. \citet{RHS23} use an operator ${\bm A_i = \exp(t_i \bm \Delta)}$, 
where $\bm \Delta \in \R^{N \times N}$ is a discretisation of the Laplacian 
$\Delta u = \partial_{xx} u + \partial_{yy} u$ with reflecting boundary 
conditions.  By adding Gaussian noise, they turn the deterministic diffusion 
evolution into a probabilistic process given by
\begin{align}
    \label{eq:ihdforward}
    \ut{i} \;=\; \bm A_i \ut{0} + \bm \eps_i, \qquad \bm \eps_i \in 
    \mathcal{N}(\bm 0, \beta_i^2  \bm I)
\end{align}
In particular, they make use of a change of basis. To this end, let 
$\bm V \in \R^{N \times N}$ denote the basis transform operator of the 
orthogonal discrete cosine transform (DCT). Furthermore, we use the notation 
$\tilde{\bm u} = \bm V \bm u$ to denote the DCT representation of a spatial 
variable $\bm u$. Then, the diffusion operator $\bm A_t = \bm V^\top \bm B_t 
\bm V$ reduces to a diagonal matrix $\bm B_t = \textnormal{diag} (\bm 
\alpha_1, ..., \bm \alpha_N)$ in the DCT domain. The 
entries of $\bm B_t$ result from the eigendecomposition of the Laplacian. 
Let the vector index $j \in \{1,...,N\}$ correspond to the position 
$(k,\ell) \in \{0,...,n_x-1\} \times \{0,...,n_y-1\}$ in the two-dimensional 
frequency domain. Then the entries of $\bm B_t$ are given by
\begin{align}
    \alpha_{j} \;=\; \exp\left(- t \pi^2  \left( \frac{k^2}{n_x} + 
    \frac{\ell^2}{n_y}  
    \right)\right)  \, .
    \label{eq:laplace_dct}
\end{align}
Note that for the frequency $(k,\ell)^\top=(0,0)^\top$, $\alpha_0 = 1$. Hence, 
the average grey value of the image is preserved. Moreover, for all 
$j \neq 0$, we have $\alpha_{j} < 1$ for $t > 0$. 
Additionally, since $\bm V$ is orthogonal and 
$\bm V^T \bm V = \bm I$, it also preserves Gaussian noise: For a sample $\bm 
\eps$ from the normal distribution $\mathcal{N}(\bm 0, \bm I)$, the transform 
$\tilde{\bm \eps}$ is from the same distribution. 
These properties are important for some of our 
scale-space considerations later on.

\citet{RHS23} proposed the DCT representation of their process for a fast 
implementation. However, \citet{HS22} used 
it to derive a more general formulation in the DCT domain. Since $\bm B_t$ is 
diagonal, a step from time $t_i$ to time $t_{i+1}$ can be considered for 
individual scalar frequencies $j$:
\begin{align}
    \tilde{u}_{i+1,j} \;=\; \alpha_{i,j} \tilde{u}_{i,j} + \beta_i \eps \, .
\end{align}
Here, $\eps$ is from $\mathcal{N}(0,1)$.
In contrast to \citet{RHS23}, they do not limit the noise variance $\sigma_t$
to minimal observation noise, but also allow to choose a noise schedule as in 
the DPM from Section~\ref{sec:ddpmforward}. This formulation is 
particularly useful for us since it constitutes another 1-D drift diffusion 
process as in Eq.~\eqref{eq:ddpmforwardrandomvar}. It allows us to transfer 
some of our previous findings to the new setting. However, note that compared 
to the model in Eq.~\eqref{eq:ddpmforwardrandomvar} we operate on frequencies 
instead of images as realisations of the random variable. Moreover, each 
frequency has its own individual parameters $\alpha_{i,j}$. In addition, 
$\beta_i$ 
could be made frequency specific. However, in practice, \citet{HS22} choose the 
same $\beta_i$ for all frequencies.
This also entails that the Markov criterion for this process in DCT space is 
given by 
\begin{align}
    \label{eq:markovdct}
    p(\tilde{\bm u}_{i+1} |\tilde{\bm 
        u}_{i}) \;=\; \mathcal{N}(\tilde{\bm u}_i | 
    \bm B_i \tilde{\bm u}_{i-1}, \beta_i^2 \bm I) \, .
\end{align}
In the following we consider a scale-space that is defined by the 
marginal distributions of the random variable $\bm U_t := \bm V^\top \tilde{\bm 
    U}_t$, i.e. the backtransform of the trajectories in DCT space. Note that 
    every 
trajectory in the frequency domain has exactly one corresponding trajectory in 
the spatial domain. Therefore, we can argue equivalently in 
the DCT domain or the spatial domain, depending on what is more convenient. As 
in Section~\ref{sec:ddpmforward}, the process starts with the distribution of 
the training data or, respectively, the distribution of its discrete cosine 
transform.


\subsubsection*{Property 1: Initial State}
By definition, the distribution $p(\bm u_0)$ at time $t_0=0$ is given by the 
distribution $p(\bm F)$ of the training data.


\subsubsection*{Property 2: Semigroup Property} 
The distribution $p(\bm u_i)$ at scale $i$ can be reached equivalently in $i$ 
steps from $p(\bm u_0)$ or in $\ell$ steps from $p(\bm u_{i-\ell})$. 
The Markov property is fulfilled in the DCT domain. Thus,  the proof for the 
semigroup property is analogous to the DPM model in 
Section~\ref{sec:ddpmforward}. 
For each intermediate scale we can switch back to the spatial domain by 
multiplication with $\bm V^\top$.


\subsubsection*{Property 3: Lyapunov Sequence}
To establish this information reduction property, we require an analogous 
statement to Eq.~\eqref{eq:ddpmonestep}. By using a 
similar induction proof for each frequency, we obtain
\begin{align}
    \label{eq:bdonestep}
    \tilde{\bm U}_i \; = \; \bm M_i  \tilde{\bm U}_0 + 
    \bm \Sigma_i \bm G \, .
\end{align}
Here, $\bm G$ is from $\mathcal{N}(\bm 0, \bm I)$, $\bm M_i = 
\textnormal{diag}(\lambda_{i,0},...,\lambda_{i,N})$ with 
\begin{align}
    \label{eq:lambdadct}
    \lambda_{i,j} \;=\; \prod_{\ell = 1}^{i} \alpha_{\ell, j} \, , 
\end{align}
and  $\bm \Sigma_u = \textnormal{diag}(\gamma_{i,0},...,\gamma_{i,N})$
with
\begin{align}
    \label{eq:gammadct}
    \gamma_{i,j} \;=\;  \left(\sum_{k=1}^{i-1} \beta_{i} \prod_{\ell = 
        k+1}^{i} 
    \alpha_{\ell,j} + \beta_{i} \right) \, . 
\end{align} 
This is a frequency dependent analogue statement to the direct transition from 
time $t_0$ to time $t_i$ in the DPM setting in Eq.~\eqref{eq:gamma}.

\medskip
\noindent
\begin{proposition}[Increasing Conditional Entropy]  The conditional entropy
    \label{prop:bdlyapunov}
    \begin{align}
        \label{eq:condentropy_bd}
        H_p(\tilde{\bm U}_i | \tilde{\bm U}_0) &\;=\; - \int \int p(\tilde{\bm 
            u}_i, \tilde{\bm u}_0) \ln p(\bm 
        \tilde{\bm u}_i | \bm \tilde{\bm u}_0) \, \intd \bm \tilde{\bm u}_0 
        \intd \bm \tilde{\bm u}_i 
    \end{align}
    increases with $i$ under the assumption that for all frequencies $j$ and 
    $\beta_{i} \in (0,1)$ we have $\beta_{i+1} \geq (1-\alpha_{i+1,j}) 
    \gamma_{i,j}$. 
\end{proposition}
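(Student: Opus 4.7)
The plan is to mirror the proof of Proposition~\ref{prop:conditionalentropy} from the DPM setting, exploiting the diagonal structure of the blurring-diffusion process in the DCT domain to reduce the claim to a family of per-frequency monotonicity conditions that follow from a recursion analogous to Eq.~\eqref{eq:gammarecursive}.

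First, I would factor out the marginal $p(\tilde{\bm u}_0)$ from Eq.~\eqref{eq:condentropy_bd} and write
\begin{align*}
H_p(\tilde{\bm U}_i | \tilde{\bm U}_0)
\;=\;
\int p(\tilde{\bm u}_0) \, \Big(-\int p(\tilde{\bm u}_i | \tilde{\bm u}_0) \ln p(\tilde{\bm u}_i | \tilde{\bm u}_0) \intd \tilde{\bm u}_i \Big) \intd \tilde{\bm u}_0 \, ,
\end{align*}
thereby reducing monotonicity of the conditional entropy to monotonicity of the inner differential entropy $H_p(\tilde{\bm W}_i)$ of the conditional law $p(\tilde{\bm u}_i | \tilde{\bm u}_0)$. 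By Eq.~\eqref{eq:bdonestep} together with the diagonality of $\bm M_i$ and $\bm \Sigma_i$, this law is the multivariate Gaussian $\mathcal{N}(\bm M_i \tilde{\bm u}_0, \bm \Sigma_i^2)$ with $\bm \Sigma_i^2 = \mathrm{diag}(\gamma_{i,1}^2, \dots, \gamma_{i,N}^2)$. Its differential entropy decomposes across frequencies,
\begin{align*}
H_p(\tilde{\bm W}_i) \;=\; \sum_{j=1}^{N} \frac{1}{2} \ln \big( 2\pi \conste \, \gamma_{i,j}^2 \big) \, ,
\end{align*}
and is in particular independent of $\tilde{\bm u}_0$. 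Consequently it suffices to show $\gamma_{i+1,j} \geq \gamma_{i,j}$ for every frequency $j$.

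The last step is the per-frequency analogue of the recursion in Eq.~\eqref{eq:gammarecursive}: pulling $\alpha_{i+1,j}$ out of the product in Eq.~\eqref{eq:gammadct} yields $\gamma_{i+1,j} = \alpha_{i+1,j} \gamma_{i,j} + \beta_{i+1}$, so the desired inequality $\gamma_{i+1,j} \geq \gamma_{i,j}$ is equivalent to $\beta_{i+1} \geq (1-\alpha_{i+1,j}) \gamma_{i,j}$, which is exactly the hypothesis of the proposition. Summing the per-frequency contributions then gives $H_p(\tilde{\bm W}_{i+1}) \geq H_p(\tilde{\bm W}_i)$ and hence the claim. I do not anticipate any real obstacle here; the only point that demands care is that the condition must be imposed uniformly over $j$, since the eigenvalues from Eq.~\eqref{eq:laplace_dct} vary with frequency -- the DC mode $\alpha=1$ makes the bound vacuous, whereas the high-frequency modes (smallest $\alpha_{i+1,j}$) impose the tightest constraint on $\beta_{i+1}$ and thus effectively dictate the admissible noise schedule.
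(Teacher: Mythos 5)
Your proposal is correct and follows essentially the same route as the paper's proof: reduce the conditional entropy to the differential entropy of the Gaussian $p(\tilde{\bm u}_i \mid \tilde{\bm u}_0)$, exploit the diagonal covariance so the entropy is a sum over frequencies, and use the per-frequency recursion $\gamma_{i+1,j} = \alpha_{i+1,j}\gamma_{i,j} + \beta_{i+1}$ to turn monotonicity into exactly the stated hypothesis. Your writing of the covariance entries as $\gamma_{i,j}^2$ is in fact slightly cleaner than the paper's, and your closing remark on the vacuous DC mode versus the binding high-frequency constraint is a correct, if optional, observation.
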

\begin{proof}
    As in Section~\ref{sec:ddpmforward}, the statement is equivalent to showing 
    that the entropy $H_p(\bm W_i)$ of the distribution $p(\tilde{\bm u}_i | 
    \tilde{\bm u}_0)$
    is increasing. Thus, we need to show that $H_p(\bm W_i) \geq H_p(\bm 
    W_{i-1})$.
    The probability distribution of $p(\tilde{\bm u}_i | \tilde{\bm u}_0)$ can 
    be inferred from Eq.~\eqref{eq:ddpmonestepdensity}, but it is more complex 
    than in the DPM case due to the frequency dependent parameters. 
    
    Fortunately, the entropy of the multivariate Gaussian distribution  
    $\mathcal{N}(\bm M_t \tilde{\bm u}_0, \bm \Sigma_t)$ only depends on its 
    covariance matrix $\bm \Sigma_t$ and is given by
    \begin{align}
        H_p(\bm W_i) = \frac{1}{2} \ln\Big(\big(2\pi \conste \, \big)^n 
        \det{\bm \Sigma_i}\Big) = \frac{1}{2} \ln\Bigg(\big(2\pi \conste 
        \,\big)^n \prod_{j=1}^N \gamma_{i,j} \Bigg)
    \end{align}
    If $\gamma_{i+1,j} \geq \gamma_{i,j}$ holds for all frequencies $j$, the 
    entropy is increasing. For a fixed frequency $j$, we can transfer the 
    previous result from Eq.~\eqref{eq:gammarecursive} to the scalar setting:
    \begin{align}
        \gamma_{i+1, j} = \alpha_{i+1,j} \gamma_{i,j} + \beta_{i+1} \, .
    \end{align}
    Since $\beta_i > 0$ and $\gamma_i > 0$, we require 
    \begin{align}
        \alpha_{i+1,j} \gamma_{i,j} + \beta_{i+1} \geq \gamma_{i,j} \quad 
        \Leftrightarrow \quad 
        \beta_{i+1} 
        \geq (1-\alpha_{i+1,j}) \gamma_{i,j}
    \end{align}
    This is a direct extension of our previous result in 
    Section~\ref{sec:ddpmforward} to the frequency setting.
\end{proof}

Note that there are also previous results for deterministic diffusion filters
that use entropy as a Lyapunov sequence~\cite{We97}. However, there the entropy 
is defined on the pixel values of the image instead of an evolving probability 
distribution. Individual trajectories rely on a deterministic 
diffusion filter. However, due to the added noise, the entropy statements of 
classical scale-spaces do not transfer to the trajectories of blurring 
diffusion.


\subsubsection*{Property 3: Preservation of the Average Grey Value}
The DPM scale-space from Section~\ref{sec:ddpmforward} ensures convergence to
Gaussian noise with zero mean, independently of the initial image. This 
requires the mean of the image to change. The behaviour of blurring diffusion 
scale-spaces is significantly different.

\medskip
\begin{proposition}[Preservation of the Average Grey Value] 
    \label{prop:bdavg}
    Let $\bm u_0$ denote an image from the initial distribution $p(\bm f)$.
    Then, all images in the trajectory $\bm u_0, ..., \bm u_m$ have the same 
    average grey value.
\end{proposition}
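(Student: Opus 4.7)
The plan is to reduce the spatial-domain claim to a statement about a single DCT coefficient. Because $\bm V$ is orthogonal and its zero-frequency basis vector is the normalised constant vector, the DC coefficient satisfies $\tilde{u}_{i,0} = \sqrt{N}\, \bar{u}_i$, where $\bar{u}_i$ denotes the spatial mean of $\bm u_i$. Preservation of the average grey value along the trajectory is therefore equivalent to preservation of $\tilde{u}_{i,0}$ across $i$, which is a one-dimensional statement that I can address inside the decoupled DCT-domain update.

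In the DCT domain, the step from scale $i$ to scale $i{+}1$ factorises across frequencies as $\tilde{u}_{i+1,j} = \alpha_{i,j}\tilde{u}_{i,j} + \beta_{i,j}\epsilon$. At the DC frequency $(k,\ell)=(0,0)$, Eq.~\eqref{eq:laplace_dct} gives $\alpha_{i,0}=1$, so the deterministic part of the update leaves the DC coefficient invariant. To kill the remaining noise contribution on this mode, I would invoke the paper's explicit allowance for a frequency-specific noise schedule and take $\beta_{i,0}=0$, which is the natural variance-preserving pairing with $\alpha_{i,0}=1$ and which is precisely what is required for $\bm A_i$ extended by noise to remain mean-preserving. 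Under this choice the DC update collapses to $\tilde{u}_{i+1,0} = \tilde{u}_{i,0}$ along every realisation.

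A one-line induction then yields $\tilde{u}_{m,0} = \tilde{u}_{0,0}$ for every $m$ in the trajectory. Transforming back via $\bm u = \bm V^\top \tilde{\bm u}$ recovers the spatial-domain conclusion: each image $\bm u_i$ in the trajectory has the same pixel mean as $\bm u_0$. Because the DC mode is orthogonal (in the DCT basis) to all other coefficients, the noise added at non-zero frequencies cannot leak into the mean under the backtransform, so the induction argument carries through pathwise and not merely in expectation.

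The main obstacle is the modelling subtlety rather than any calculation: the statement is false along almost every sample path if any noise is injected into the DC mode at any scale. The proof therefore stands or falls on making explicit that the noise schedule must respect the mean-preserving eigenmode of the diffusion operator $\bm A_i$. I would accordingly flag at the start of the proof that we adopt $\beta_{i,0}=0$ on the DC frequency, so that the otherwise scalar $\beta_i$ used for non-zero frequencies does not spoil the pathwise preservation of $\bar{u}_i$.
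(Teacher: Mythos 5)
Your route is genuinely different from the paper's and, in fact, more careful. The paper argues entirely in the spatial domain: each step of Eq.~\eqref{eq:ihdforward} applies a homogeneous diffusion operator, which preserves the average grey value, and then adds Gaussian noise with zero \emph{distributional} mean, from which it concludes preservation. You instead diagonalise in the DCT basis, isolate the DC coefficient $\tilde{u}_{i,0}=\sqrt{N}\,\bar{u}_i$, use $\alpha_{i,0}=1$ from Eq.~\eqref{eq:laplace_dct}, and reduce the claim to the one-dimensional update of that single mode. The payoff of your decomposition is that it exposes exactly the point the paper's proof glosses over: the proposition is a \emph{pathwise} statement about every image in a trajectory, and a realisation of $\mathcal{N}(\bm 0,\beta_i^2\bm I)$ has empirical spatial mean distributed as $\mathcal{N}(0,\beta_i^2/N)$, which is almost surely nonzero. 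Zero-mean noise therefore preserves the average grey value only in expectation (or approximately, up to $O(\beta_i/\sqrt{N})$ per step), not exactly along each trajectory. Your fix --- setting $\beta_{i,0}=0$ on the DC mode --- makes the statement literally true, but be aware that it is an additional hypothesis not adopted by the paper, which explicitly uses the same $\beta_i$ for all frequencies. In short: your argument is correct under your stated assumption and is the cleaner of the two; the paper's spatial-domain argument is shorter but, read strictly, only establishes preservation of the mean in expectation, and the discrepancy you flag is a genuine one.
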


This statement directly follows from an observation on the spatial version of 
the process. In every step, we add Gaussian noise with mean zero to a diffusion 
filtered image. Since diffusion filtering preserves the average grey value 
\cite{We97}, this also holds for blurring diffusion. In colour images, the 
preservation of the average colour value applies for each channel.


\subsubsection*{Property 4: Rotation Invariance}
In contrast to the forward DPM scale-space, blurring diffusion takes into 
account the neighbourhood configuration in the spatial domain due to the 
blurring of the homogeneous diffusion operator. Thus, permutation invariance 
does not apply to blurring diffusion scale-spaces.

However, in a space-continuous setting, the diffusion operator is rotationally 
invariant. The same applies to Gaussian noise samples: Under rotation, they 
remain samples from the same noise distribution. In the fully discrete setting, 
this rotation invariance is typically partially lost since only $90^\circ$ 
rotations align perfectly with the pixel grid. However, this depends on the 
concrete implementation of the process. From this observation, we can directly 
deduce the following statement.

\medskip
\begin{proposition}[Rotation Invariant Trajectories] 
    Let $\bm u_0$ denote an image from the initial distribution and $\bm v_0 := 
    R(\bm u_0)$ a rotation by a multiple of $90^\circ$. Then, any trajectory 
    $\bm v_0, ... \bm v_m$
    obtained from the process in Eq.~\eqref{eq:ddpmforwardrandomvar} is given by
    $\bm v_i = R(\bm u_i)$ for a trajectory $\bm u_0, ..., \bm u_m$ starting 
    with the original image $\bm u_0$.
\end{proposition}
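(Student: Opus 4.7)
The plan is to mirror the induction argument used for permutation invariance of forward DPM, but now tracking the interplay between the homogeneous diffusion operator and the rotation $R$. Unlike arbitrary permutations, $\bm A_i$ mixes neighbouring pixels, so I cannot hope it acts componentwise. Instead, the argument must rely on a genuine commutation relation $R \bm A_i = \bm A_i R$.

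First I would establish this commutation for square grids ($n_x = n_y$) with reflecting boundary conditions. The discrete Laplacian $\bm \Delta$ is built from a rotationally symmetric five-point stencil, and reflecting boundary conditions are themselves invariant under $90^\circ$ rotations of the grid, so $R \bm \Delta = \bm \Delta R$. The relation $R \bm A_i = \bm A_i R$ for $\bm A_i = \exp(t_i \bm \Delta)$ then follows by expanding the exponential into a power series. Second, I would note that the noise term is invariant in distribution: since $R$ is an orthogonal permutation matrix, for any realisation $\bm g_i$ of $\bm \epsilon_i \sim \mathcal{N}(\bm 0, \beta_i^2 \bm I)$, the rotated sample $\tilde{\bm g}_i := R^{-1}(\bm g_i)$ lies in the same normal distribution.

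With these two ingredients in place, the proof proceeds by induction on $i$, in complete analogy to the permutation-invariance proposition in Section~\ref{sec:ddpmforward}. The base case $\bm v_0 = R(\bm u_0)$ holds by definition. For the inductive step, given a realisation of the step producing $\bm v_i = \bm A_i \bm v_{i-1} + \bm g_i$, I pair it with the $\bm u$-trajectory using noise $\tilde{\bm g}_i = R^{-1}(\bm g_i)$, yielding
\begin{equation*}
\bm v_i \;=\; \bm A_i R(\bm u_{i-1}) + R(\tilde{\bm g}_i)
\;=\; R\bigl(\bm A_i \bm u_{i-1} + \tilde{\bm g}_i\bigr)
\;=\; R(\bm u_i).
\end{equation*}
Since $R$ is bijective, every $\bm v$-trajectory corresponds to exactly one $\bm u$-trajectory, giving the claimed pairing.

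The main obstacle is the discrete commutation $R \bm A_i = \bm A_i R$: it holds cleanly only on square grids with a rotationally symmetric stencil and compatible boundary conditions. On rectangular domains or with asymmetric discretisations this already fails at the level of $\bm \Delta$, so the proposition is inherently tied to implementation choices — consistent with the remark preceding the statement that rotational symmetry in the fully discrete setting is typically only partial.
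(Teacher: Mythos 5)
Your proof is correct and follows essentially the same reasoning the paper sketches: the paper gives no formal proof, merely asserting that the statement "directly" follows from the rotation invariance of the diffusion operator and of isotropic Gaussian noise, and your argument makes exactly that precise via the commutation $R\bm A_i = \bm A_i R$ and an induction mirroring the permutation-invariance proposition. Your caveat about square grids and implementation-dependent discretisations matches the paper's own remark preceding the statement.
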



\subsubsection*{Property 5: Steady State}
Due to the preservation of the average grey value in Property 3, a blurring 
diffusion scale-space cannot converge to a noise distribution with zero 
mean unless the initial image already had a zero mean. Images typically 
have nonnegative pixel values (e.g. a range of $[0,255]$ or $[0,1]$).
Thus, a zero mean would be only possible for a flat image or after a 
transformation to a range that is symmetric to zero (e.g. $[-1,1]$). We do not 
make any such assumption. 
However, for the sake of simplicity, we consider greyscale 
images in the following. For colour images, the same statements apply for each 
channel. 

\medskip
\begin{proposition}[Convergence to a Mixture of Normal Distributions] 
    Under the assumptions of Proposition~\ref{prop:bdlyapunov},
    $\alpha_{i,j} \in (0,a_j]$ with $0<a_j<1$, $\beta_i \in (0,b]$ with 
        $0<b<1$, and $i \rightarrow \infty$, the 
    forward  process from Eq.~\eqref{eq:density_fwdtrajectory} converges to a 
    mixture of 
    normal distributions $\mathcal{N}(\mu_k \bm I, \bm \Sigma)$ with 
    $\bm \Sigma = \textnormal{diag}(\sigma_1,...,\sigma_N)$ and $\sigma_i 
    \leq \frac{b}{1-a_j}$. The value $\mu_k$ with $k \in \{0,...,n_f\}$ 
    assumes all possible average grey levels from the training database. 
\end{proposition}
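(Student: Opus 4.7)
The plan is to mirror the proof of Proposition~\ref{prop:ddpmsteady} but to carry it out componentwise in the DCT domain, where the process decouples into one-dimensional Gaussian chains by Eq.~\eqref{eq:markovdct}. Using the closed-form expression in Eq.~\eqref{eq:bdonestep}, the $j$-th entry satisfies $\tilde U_{i,j} = \lambda_{i,j}\tilde U_{0,j} + \gamma_{i,j} G_j$ with $G_j \sim \mathcal{N}(0,1)$ independent across $j$, so it suffices to analyse each frequency separately and then transform back.

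For every non-DC frequency, i.e.\ $(k,\ell)\neq(0,0)$ in Eq.~\eqref{eq:laplace_dct}, the hypothesis $\alpha_{i,j}\in(0,a_j]$ with $a_j<1$ gives $\lambda_{i,j}=\prod_{\ell=1}^{i}\alpha_{\ell,j}\le a_j^{\,i}\to 0$, so the contribution of the initial state vanishes. The geometric-series estimate from the DPM proof carries over verbatim: $\gamma_{i,j}\le b\sum_{m=0}^{i-1} a_j^{\,m}\to b/(1-a_j)$, and monotonicity from Proposition~\ref{prop:bdlyapunov} guarantees that $\gamma_{i,j}$ itself converges to a limit $\sigma_j\le b/(1-a_j)$. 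Hence the joint limit over the non-DC entries is a centred Gaussian with diagonal covariance matching the $\bm\Sigma$ of the statement, independent of $\tilde{\bm U}_0$. The DC component $j=0$ must be treated separately, since Eq.~\eqref{eq:laplace_dct} forces $\alpha_{i,0}=1$ and the uniform bound $a_j<1$ cannot literally hold there. Invoking Proposition~\ref{prop:bdavg}, the average grey value is preserved along every trajectory, so the DC coordinate of $\tilde{\bm U}_i$ remains, up to the DCT normalisation constant, the average grey value $\mu_k$ of the seeding image.

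To finish, I would transform back via the orthogonal map $\bm V^\top$: the centred Gaussian piece in DCT space remains a centred Gaussian in spatial space with the prescribed covariance, while the DC contribution produces the constant mean encoded by $\mu_k \bm I$ in the statement. Conditioning on which training image $\bm f_k$ seeds the trajectory yields one Gaussian component per distinct average grey value, and averaging over the training distribution over $k\in\{0,\dots,n_f\}$ produces the claimed mixture of normals. The main obstacle I anticipate is exactly this special status of the DC frequency, because the blanket hypothesis cannot cover $j=0$: the argument must therefore be split into a DPM-style analysis for $j\neq 0$ and a grey-value preservation argument via Proposition~\ref{prop:bdavg} for $j=0$, and one has to justify that conditioning on the initial image then reassembling through the training distribution produces a genuine finite mixture. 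Aside from this bookkeeping, the remainder is a direct frequency-wise transcription of the DPM convergence proof.
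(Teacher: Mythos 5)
Your proposal follows essentially the same route as the paper's proof: a frequency-wise analysis in the DCT domain mirroring Proposition~\ref{prop:ddpmsteady} ($\lambda_{i,j}\to 0$ and $\gamma_{i,j}\to\sigma_j\le b/(1-a_j)$ for $j\neq 0$), with the DC frequency handled separately via $\alpha_{i,0}=1$ and grey-value preservation (Proposition~\ref{prop:bdavg}), followed by backtransformation and assembling the mixture over the training images' mean grey values. Your explicit remark that the blanket bound $a_j<1$ cannot cover $j=0$ is a correct and slightly more careful reading of the hypothesis than the paper itself makes.
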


\begin{proof}
    Consider a trajectory starting from an arbitrary image $\bm u_0$ from the 
    training database with mean $\mu$.
    As in the proof for Proposition~\ref{prop:bdlyapunov} the findings from 
    Eq.~\eqref{eq:ddpmonestep} and Eq.~\eqref{eq:gamma} for DPM carry over to 
    our setting in the scalar case for each frequency $j$:
    \begin{align}
        \tilde{u}_{i,j} \;=\; \lambda_{i,j} \tilde{u}_{0,j} + \gamma_{i,j}  \eps
    \end{align}
    with $\eps$ from $\mathcal{N}(0, 1)$. For the convergence of 
    $\lambda_{i,j}$ we have to consider the product of all frequency specific 
    $\alpha_{i,j}$ according to Eq.~\eqref{eq:lambdadct}.
    Here we have the special case $\alpha_{i,0} = 1$ for the lowest frequency, 
    i.e. $\lambda_{i,0}=1$ for all $i$. This is consistent with our previous 
    findings in Proposition~\ref{prop:bdavg}: The lowest frequency represents 
    the average grey level, which remains unchanged.
    
    For $j>0$, we have  $\alpha_{i,j} < 1$ and thus $\lambda_{i,j} \rightarrow 
    0$ for $i \rightarrow \infty$. Thus, for all other frequencies the 
    contribution of $\bm u_i$ vanishes and only its mean $\mu$ is preserved.
    The convergence of  $\gamma_{i,j}$ to $\sigma_i \leq \frac{b}{1-a_j}$ 
    is analogous to the proof of Proposition~\ref{prop:ddpmsteady}. This 
    determines the standard deviation of the noise for each frequency.
\end{proof}

\medskip
Overall, blurring diffusion constitutes a scale-space that resembles the DPM 
scale-space in its architectural properties. The key difference lies in the 
incorporation of 2-D neighbourhood relationships between pixel values in the 
spatial domain. In the DCT domain, this translates to an individual set of 
process parameters for each frequency.


\section{Diffusion Probabilistic Models and Osmosis}
\label{sec:discussion}

All diffusion probabilistic models considered in 
Section~\ref{sec:probdiffscale} have in common that they are connected to 
drift-diffusion processes. 
In image processing with  partial differential equations (PDEs), osmosis 
filters proposed by \citet{WHBV13} are successfully applied to 
image editing and restoration 
tasks~\cite{DMM18,VHWS13,WHBV13,PCCSW19}. 
Since they share their origin with diffusion probabilistic models, namely 
the Fokker-Planck equation \cite{Ri84}, we discuss connections in the 
following. First, we briefly review the PDE formulation of 
osmosis filtering. Afterwards, we discuss common properties as well as 
differences between both classes of models. Finally, we compare all four models 
experimentally.

\subsection{Continuous Osmosis Filtering}
\label{sec:osmosis}

Unlike the first part of the manuscript, we consider a grey value image as a 
function $f : \Omega \rightarrow \mathbb{R}_+$ defined on the image domain  
$\Omega \subset \mathbb{R}^2$. It maps each coordinate $\bm x \in \Omega$ to a 
positive grey value. In the following we limit our description to grey value 
images for the sake of simplicity. The filter can be extended to colour or 
arbitrary other multi-channel images (e.g. hyperspectral) by applying it 
channel-wise.

As in diffusion  probabilistic models, osmosis filters consider the 
evolution of an image $u: \Omega \times [0,\infty) \rightarrow \mathbb{R}_+$ 
over time.
However, here the evolution is entirely deterministic. Its initial state is 
given by a starting image $f$, i.e. for all $\bm x$ we have $u(\bm x, 0) = 
f(\bm x)$. In addition, the second major factor that influences the 
evolution is the \emph{drift vector field} $\bm d: \Omega \rightarrow 
\mathbb{R}^2$ that can be chosen independently of the initial image and is 
typically used for filter design.

Given these two degrees of freedom, the image evolution fulfils the 
PDE~\cite{WHBV13}
\begin{align}
    \partial_t u &\;=\; \Delta u - \Div(\bm d u) & \quad \text{on } 
    \Omega 
    \times (0,T] \, .  \label{eq:ospde}
\end{align}
At the boundaries of $\Omega$, reflecting boundary conditions avoid any exchange
of information with areas outside of the image domain. There is a direct 
connection of this model to the inverse heat dissipation~\cite{RHS23} and 
blurring diffusion~\cite{HS22}: All of these processes build on 
homogeneous diffusion~\cite{Ii62}, which is a special case of osmosis with $\bm 
d = \bm 0$.

However, a non-trivial drift vector field enables to describe evolutions that
do not merely smooth an image. The Laplacian $\Delta u$, which 
corresponds to the diffusion part of the PDE in Eq.~\eqref{eq:ospde}, 
represents a symmetric exchange of information between neighbouring pixels. 
This symmetry can be broken by the drift component $-\Div(\bm d u) = 
-\partial_x (d_1 u) - \partial_y (d_2 u)$. This is also vital for our own
goal of relating osmosis image evolutions to diffusion probabilistic 
models and will allow us to introduce stochastic elements without any need to 
modify the PDE above.

\subsection{Relating Osmosis and Diffusion Probabilistic Models}
\label{sec:common}

As noted in the previous section, for $\bm d=\bm 0$, osmosis is directly 
connected
to blurring diffusion. For the original DPM model, the connection is less 
obvious, but equally close. To this end, it is instructive to consider a 1-D 
osmosis process. Its evolution is described by the PDE
\begin{align}
    \partial_t u &\;=\;  \partial_{xx} u - \partial_x (d u)  & \quad 
    \text{on } 
    \Omega \subset{\R}
    \times (0,T] \, .  \label{eq:1dospde}
\end{align} 
A comparison with the PDE formulation of the DPM forward process in 
Eq.~\eqref{eq:fellerfwd} reveals a significant structural similarity. Both 
equations describe an evolution w.r.t. the time $t$: 1-D osmosis considers an 
evolving image $u$, while the DPM evolution is defined on a probability 
density $p$. The derivative $\partial_{\bm u_t}$ w.r.t. positions of individual 
particles in the diffusion probabilistic model corresponds to the spatial 
derivative $\partial_x$ in Eq.~\eqref{eq:1dospde}. The moments correspond to 
diffusivity and drift factors for the osmosis PDE and thus we can interpret 1-D 
osmosis as the deterministic counterpart to the probabilistic drift-diffusion 
process of DPM.

While both models are derived from the same physical principle, a 1-D osmosis 
filter would typically act on a 1-D signal instead of an image. In this 1-D 
setting, it also considers neighbourhood relations and exchanges information 
between neighbouring signal elements. In contrast, the 1-D drift-diffusion of 
DPM acts on individual pixels of a 2-D image which correspond to positions of 
individual particles. Thus, 1-D osmosis creates evolving 1-D signals, while a 
DPM trajectory consists of 2-D images.

For a meaningful comparison between the properties of osmosis filtering and the 
three diffusion probabilistic models, we require 2-D osmosis as in 
Eq.~\eqref{eq:ospde}. Moreover, we need to add a stochastic component. 
Ideally, every trajectory of this new probabilistic osmosis process should 
inherit the theoretical properties of the deterministic model. Inverse heat 
dissipation~\cite{RHS23} can be seen as a similar approach in that it provides 
a stochastic counter part to homogeneous diffusion~\cite{Ii62}. However, there 
the addition of noise marks a clear departure from the PDE model and also 
implies that properties of the deterministic model do not carry over to the 
trajectory.

Instead of adding noise, we design the drift vector field $\bm d$ in such a 
way that the osmosis evolution naturally converges to a noise sample. 
\citet{WHBV13} have shown that osmosis preserves the average grey value. Thus, 
for a given image $f$ from the training data, we choose a noise image $ 
\eps$, where at every location $x \in \Omega$, $\eps(x)$ is from  
$\mathcal{N}(\mu_{f}, \sigma^2)$. Here, $\mu_{f}$ is the average grey value of 
$f$. Note that osmosis requires positive images.  In practice, we clip the 
noise sample to $[10^{-6},1]$. While this truncates the tails of the Gaussian 
distribution, it has little impact for small standard deviation~$\sigma$.

Now we can use another result of \citet{WHBV13} and consider the 
\emph{compatible} case for osmosis. Our noise sample $\eps$ acts as 
guidance 
image for the osmosis process. The corresponding canonical drift vector field 
is defined by
\begin{equation}
    \bm d = \frac{\grad \eps}{\eps} \label{eq:canonical}
\end{equation}
with the spatial gradient $\grad$. Since $\mu_f=\mu_\eps$, the 
osmosis 
process converges to  
\begin{equation}
    \frac{\mu_{f}}{\mu_{\eps}}  \eps = \eps \, .
\end{equation}
Considering intermediate results of the evolution at times $t_i$ yields a 
trajectory $u_0, ..., u_m$ of the probabilistic osmosis process with $u_0 = f$.
Even though the intermediate marginal probabilities or the transition 
probabilities are not known, the process starts with the target distribution 
$p(\bm F)$ and converges to an approximation to a normal distribution. In 
the following, we can empirically compare trajectories of this osmosis process 
to trajectories of diffusion probabilistic models.


\begin{figure*}[!th]
    \begin{center}
        \tabcolsep2pt
        \newlength{\imgwidth}
        \setlength{\imgwidth}{0.155\textwidth}
        \begin{tabular}{cccccc}
            $t=1$ & $t=5$ & $t=10$ & $t=100$ & $t=1000$ & $t=10000$ 
            \\[1.5mm]
            \multicolumn{6}{c}{(a) diffusion probabilistic model 
                (\citet{SMDH13})}\\
            \includegraphics[width=\imgwidth]{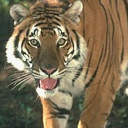}
            &
            \includegraphics[width=\imgwidth]{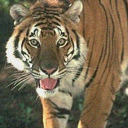}
            &
            \includegraphics[width=\imgwidth]{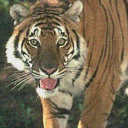}
            &
            \includegraphics[width=\imgwidth]{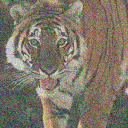}
            &
            \includegraphics[width=\imgwidth]{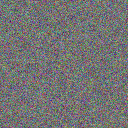}
            &
            \includegraphics[width=\imgwidth]{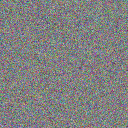}
            \\[1.5mm]
            \multicolumn{6}{c}{(b) inverse heat dissipation (\citet{RHS23})}\\
            \includegraphics[width=\imgwidth]{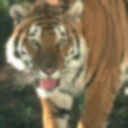}
            &
            \includegraphics[width=\imgwidth]{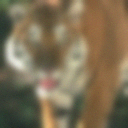}
            &
            \includegraphics[width=\imgwidth]{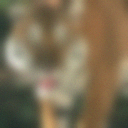}
            &
            \includegraphics[width=\imgwidth]{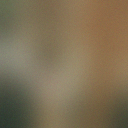}
            &
            \includegraphics[width=\imgwidth]{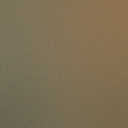}
            &
            \includegraphics[width=\imgwidth]{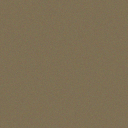}
            \\[1.5mm]
            \multicolumn{6}{c}{(c) homogeneous diffusion (\citet{Ii62})}\\
            \includegraphics[width=\imgwidth]{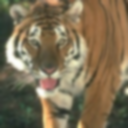}
            &
            \includegraphics[width=\imgwidth]{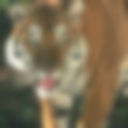}
            &
            \includegraphics[width=\imgwidth]{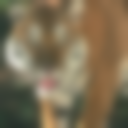}
            &
            \includegraphics[width=\imgwidth]{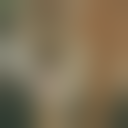}
            &
            \includegraphics[width=\imgwidth]{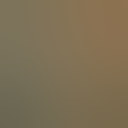}
            &
            \includegraphics[width=\imgwidth]{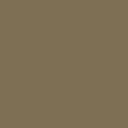}
            \\[1.5mm]        
            \multicolumn{6}{c}{(d) blurring diffusion (\citet{HS22})}\\
            \includegraphics[width=\imgwidth]{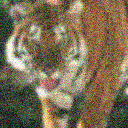}
            &
            \includegraphics[width=\imgwidth]{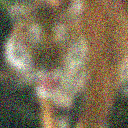}
            &
            \includegraphics[width=\imgwidth]{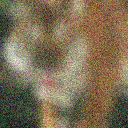}
            &
            \includegraphics[width=\imgwidth]{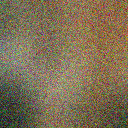}
            &
            \includegraphics[width=\imgwidth]{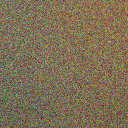}
            &
            \includegraphics[width=\imgwidth]{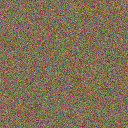}
            \\[1.5mm]
            \multicolumn{6}{c}{(e) osmosis (\citet{WHBV13})}\\[0.5mm]
            \includegraphics[width=\imgwidth]{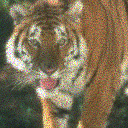}
            &
            \includegraphics[width=\imgwidth]{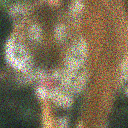}
            &
            \includegraphics[width=\imgwidth]{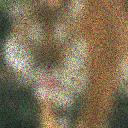}
            &
            \includegraphics[width=\imgwidth]{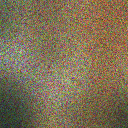}
            &
            \includegraphics[width=\imgwidth]{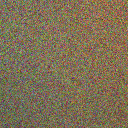}
            &
            \includegraphics[width=\imgwidth]{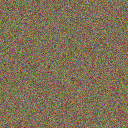}
            \\            
        \end{tabular}
    \end{center}%
    
    \caption{\label{fig:scalespaces} \textbf{Visual comparison of 
            trajectories.} The diffusion probabilistic model (a) behaves 
        distinctively different compared to the other approaches since it does 
        not 
        perform blurring in the image domain. Due to the minimal amounts of 
        added 
        noise, inverse heat dissipation (b) closely resembles homogeneous 
        diffusion 
        (c). With a suitable noise schedule, blurring diffusion (d) closely 
        resembles osmosis filtering (e).  
    }
\end{figure*}

\subsection{Comparing Osmosis Filters and Diffusion Probabilistic Models}

For concrete experiments, we require a discrete implementation of the 
continuous osmosis model from Section~\ref{sec:osmosis}. As \citet{VHWS13}, we 
use a stabilised BiCGSTAB solver~\cite{Me15}. For the noise guidance images, we 
use a standard deviation of $\sigma=0.1$. All experiments are conducted on the 
Berkeley segmentation dataset \emph{BSDS500}~\cite{AMFM11}. 

We compare to the three models from Section~\ref{sec:probdiffscale}.
According to Eq.~\eqref{eq:ddpmforwardrandomvar}, we implement forward 
DPM~\cite{SMDH13} by successively adding Gaussian noise with the standard 
parameter choice ${\alpha_i = \sqrt{1-\beta_i^2}}$ and $\beta_i=0.1$. Inverse 
heat dissipation~\cite{RHS23} and blurring diffusion~\cite{HS22} can be 
implemented in many equivalent ways. We based our implementation on the 
reference code of~\citet{RHS23}, which implements the Laplacian in the DCT 
domain according to Eq.~\eqref{eq:laplace_dct}. For inverse heat dissipation, 
we use the standard parameter $\beta_i = 0.01$. For blurring diffusion, we 
choose $\beta_i = 0.1$ such that it coincides with the guidance noise of our 
probabilistic osmosis. Furthermore, we also include homogeneous 
diffusion~\cite{Ii62} without any added noise in our comparisons.

\subsubsection{Visual Comparison}

Fig.~\ref{fig:scalespaces} reveals visual similarities and differences between 
model trajectories. All five models successively remove features of the initial 
training sample: They are drowned out by noise, by 
blur, or a combination of both. For the probabilistic models, we have shown 
that this information reduction is quantified by entropy-based Lyapunov 
sequences. A similar statement holds for images in an osmosis 
trajectory. As shown by \citet{Sc18}, the relative entropy of $u$ w.r.t. the 
noise sample $\eps$ is increasing:  
\begin{align}
    L(t) \;:=\; -\int_\Omega  u(\bm x,t)\ln \left( \frac{u(\bm x,t)}{w(\bm 
        x)}\right) \, \intd\bm x \, .
\end{align}
This also reflects the transition from the initial image $\bm f$ to the steady 
state $\eps$. Similar statements on an unconditioned entropy apply to 
homogeneous diffusion \cite{We97}. 

Notably, only DPM does not preserve the average  colour value of the 
initial image and converges to noise with mean zero. For visualisation 
purposes, the images of the DPM trajectory have therefore been 
affinely mapped to $[0,1]$. DPM is also the only process that does not take 
neighbourhood relations between pixels into account. Therefore, edge 
features, such as the stripe pattern in Fig.~\ref{fig:scalespaces}(a), remain 
sharp until they are completely overcome by noise. 

This observation directly results from the 1-D drift-diffusion: DPM models the 
microscopic aspect of Brownian motion with colour values as particle positions. 
All other models consider the macroscopic aspect of drift-diffusion instead 
which considers colour values as particle concentrations per pixel cell. 
Consequentially, all other models apply 2-D blur in the image domain. Due to 
the very small amount of observation noise added by \citet{RHS23}, the 
trajectory of heat dissipation is visually very similar to homogeneous 
diffusion. 

Similarly, osmosis and blurring diffusion lead to visually almost identical 
trajectories. They mainly differ in the way how noise is added: Blurring 
diffusion uses explicit addition, while osmosis transitions to noise due to the 
drift vector field. In the following, we verify these observations 
quantitatively.

\begin{figure*}[t]
    \centering
    \tabcolsep2pt
    \begin{tabular}{cc}
        \includegraphics[width=0.49\linewidth]{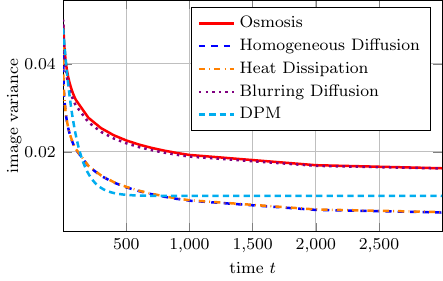}
        &
        \includegraphics[width=0.49\linewidth]{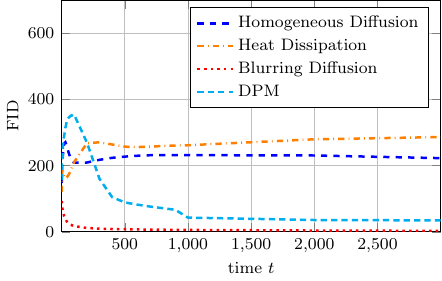}\\
        (a) variance & (b) FID
        \\[1mm]
    \end{tabular}
    \caption{\textbf{Quantitative Comparison of Diffusion Probabilistic 
                Models and Model-based Filters}.
        Both the variance evolution over time in (a) and the FID w.r.t. the 
        osmosis 
        distributions in (b) suggest that DPM differs significantly from the 
        classical diffusion and osmosis filters. Heat dissipation approximates 
        diffusion, while blurring diffusion approximates osmosis.
        \label{fig:variance}}
\end{figure*}

\subsubsection{Variance Comparison}

On the entire BSDS500 database, we evaluate the evolution of the image variance 
over time in Fig.~\ref{fig:variance}(a). As expected, the pairs homogeneous 
diffusion/heat dissipation and osmosis/blurring diffusion exhibit very similar 
evolutions of the variance. On the way to the flat steady state, homogeneous 
diffusion and heat dissipation approach zero variance. Osmosis and blurring 
diffusion converge to a noise variance defined by the input parameters while 
DPM very slowly converges to the standard distribution. These observations 
coincide with the expectations from our theoretical results.

\subsubsection{FID Comparison}

Additionally, we can judge the similarity of intermediate distributions
in the scale-space with the Fr\'echet-Inception distance (FID) \cite{HRUNH17}. 
It is widely used to judge the quality of generative models in terms of the 
approximation quality towards the target distribution. We use the 
implementation clean-fid \cite{PZZ22} that avoids discretisation artefacts
due to sampling and quantisation. Note that 
we measure the FID of probabilistic osmosis distributions relative to results 
of the other four models. Thus, a low FID indicates how closely each filter 
approximates osmosis.

Fig.~\ref{fig:variance}(b) also confirms our previous hypothesis: Heat 
dissipation and blurring diffusion consistently differ most from osmosis 
results since they rely mostly on blur and not on noise. DPM comes close to 
osmosis in its noisy steady state, but differs significantly in the initial 
evolution due the lack of 2-D smoothing. Blurring diffusion approximates 
osmosis surprisingly closely over the whole evolution.
\medskip

\citet{HS22} have found that heat dissipation improves the overall quality of 
generative models compared to DPM. Blurring diffusion yields even better 
results. Given our findings, we can interpret these observations from a 
scale-space perspective: The integration of 2-D neighbourhood relationships in 
the scale-space evolution is important for good diffusion probabilistic 
models. 
However, also the addition of sufficient amounts of stochastic perturbations is 
vital. Overall, recent advances can be interpreted as an increasingly 
accurate approximation of osmosis filtering, with an approximation to diffusion 
scale-spaces as an intermediate model. Using such an approximation instead of 
directly applying 2-D osmosis is convenient due to the more straightforward 
relation to the reverse process.


\section{Conclusions and Outlook}
\label{sec:conclusion}

Inspired by diffusion probabilistic filters, we have proposed the first 
class of generalised stochastic scale-spaces that describe evolutions of 
probability distributions instead of images. 
While the setting differs significantly from 
classical scale-spaces, central properties such as gradual, quantifiable 
simplification and causality still apply. These results suggest that in general,
sequential generative models from deep learning are closely connected to 
scale-space theory. Therefore, we hope that in the future, the scale-space 
community will benefit from the discovery of new scale-spaces that might also 
be used in different contexts. In particular, existing generative models are 
mostly focused on the steady states as the practically relevant output. The 
intermediate results of the associated scale-spaces could however also be 
useful in future applications.

On the flip side, trajectories of recent diffusion probabilistic models 
approximate well-known classical scale-space evolutions. This suggests that in 
the opposite direction, the deep learning community can potentially benefit 
from existing knowledge about scale-spaces by incorporating them into deep 
learning approaches.

\backmatter

\bmhead{Acknowledgments}

I thank  Vassillen Chizhov, Michael Ertel, Kristina Schaefer, Karl Schrader, 
and Joachim Weickert for fruitful discussions and advice.
I gratefully acknowledge the stimulating research environment of the GRK 
2853/1 ``Neuroexplicit Models of Language, Vision, and Action'', funded by the 
Deutsche Forschungsgemeinschaft (DFG, German Research Foundation) under project 
number 471607914.

\bibliography{bib.bib}


\begin{thebibliography}{78}
\ifx \bisbn   \undefined \def \bisbn  #1{ISBN #1}\fi
\ifx \binits  \undefined \def \binits#1{#1}\fi
\ifx \bauthor  \undefined \def \bauthor#1{#1}\fi
\ifx \batitle  \undefined \def \batitle#1{#1}\fi
\ifx \bjtitle  \undefined \def \bjtitle#1{#1}\fi
\ifx \bvolume  \undefined \def \bvolume#1{\textbf{#1}}\fi
\ifx \byear  \undefined \def \byear#1{#1}\fi
\ifx \bissue  \undefined \def \bissue#1{#1}\fi
\ifx \bfpage  \undefined \def \bfpage#1{#1}\fi
\ifx \blpage  \undefined \def \blpage #1{#1}\fi
\ifx \burl  \undefined \def \burl#1{\textsf{#1}}\fi
\ifx \doiurl  \undefined \def \doiurl#1{\url{https://doi.org/#1}}\fi
\ifx \betal  \undefined \def \betal{\textit{et al.}}\fi
\ifx \binstitute  \undefined \def \binstitute#1{#1}\fi
\ifx \binstitutionaled  \undefined \def \binstitutionaled#1{#1}\fi
\ifx \bctitle  \undefined \def \bctitle#1{#1}\fi
\ifx \beditor  \undefined \def \beditor#1{#1}\fi
\ifx \bpublisher  \undefined \def \bpublisher#1{#1}\fi
\ifx \bbtitle  \undefined \def \bbtitle#1{#1}\fi
\ifx \bedition  \undefined \def \bedition#1{#1}\fi
\ifx \bseriesno  \undefined \def \bseriesno#1{#1}\fi
\ifx \blocation  \undefined \def \blocation#1{#1}\fi
\ifx \bsertitle  \undefined \def \bsertitle#1{#1}\fi
\ifx \bsnm \undefined \def \bsnm#1{#1}\fi
\ifx \bsuffix \undefined \def \bsuffix#1{#1}\fi
\ifx \bparticle \undefined \def \bparticle#1{#1}\fi
\ifx \barticle \undefined \def \barticle#1{#1}\fi
\bibcommenthead
\ifx \bconfdate \undefined \def \bconfdate #1{#1}\fi
\ifx \botherref \undefined \def \botherref #1{#1}\fi
\ifx \url \undefined \def \url#1{\textsf{#1}}\fi
\ifx \bchapter \undefined \def \bchapter#1{#1}\fi
\ifx \bbook \undefined \def \bbook#1{#1}\fi
\ifx \bcomment \undefined \def \bcomment#1{#1}\fi
\ifx \oauthor \undefined \def \oauthor#1{#1}\fi
\ifx \citeauthoryear \undefined \def \citeauthoryear#1{#1}\fi
\ifx \endbibitem  \undefined \def \endbibitem {}\fi
\ifx \bconflocation  \undefined \def \bconflocation#1{#1}\fi
\ifx \arxivurl  \undefined \def \arxivurl#1{\textsf{#1}}\fi
\csname PreBibitemsHook\endcsname

\bibitem[\protect\citeauthoryear{Sohl-Dickstein et~al.}{2015}]{SMDH13}
\begin{bchapter}
\bauthor{\bsnm{Sohl-Dickstein}, \binits{J.}},
\bauthor{\bsnm{Weiss}, \binits{E.}},
\bauthor{\bsnm{Maheswaranathan}, \binits{N.}},
\bauthor{\bsnm{Ganguli}, \binits{S.}}:
\bctitle{Deep unsupervised learning using nonequilibrium thermodynamics}.
In: \beditor{\bsnm{Bach}, \binits{F.}},
\beditor{\bsnm{Blei}, \binits{D.}} (eds.)
\bbtitle{Proc.~32nd International Conference on Machine Learning}.
\bsertitle{Proceedings of Machine Learning Research},
vol. \bseriesno{37},
pp. \bfpage{2256}--\blpage{2265}.
\bconflocation{Lille, France}
(\byear{2015})
\end{bchapter}
\endbibitem

\bibitem[\protect\citeauthoryear{Goodfellow et~al.}{2014}]{GPMX14}
\begin{bchapter}
\bauthor{\bsnm{Goodfellow}, \binits{I.J.}},
\bauthor{\bsnm{Pouget-Abadie}, \binits{J.}},
\bauthor{\bsnm{Mirza}, \binits{M.}},
\bauthor{\bsnm{Xu}, \binits{B.}},
\bauthor{\bsnm{Warde-Farley}, \binits{D.}},
\bauthor{\bsnm{Ozair}, \binits{S.}},
\bauthor{\bsnm{Courville}, \binits{A.C.}},
\bauthor{\bsnm{Bengio}, \binits{Y.}}:
\bctitle{Generative adversarial nets}.
In: \beditor{\bsnm{Ghahramani}, \binits{Z.}},
\beditor{\bsnm{Welling}, \binits{M.}},
\beditor{\bsnm{Cortes}, \binits{C.}},
\beditor{\bsnm{Lawrence}, \binits{N.D.}},
\beditor{\bsnm{Weinberger}, \binits{K.Q.}} (eds.)
\bbtitle{Proc.~28th International Conference on Neural Information Processing
  Systems}.
\bsertitle{Advances in Neural Information Processing Systems},
vol. \bseriesno{27},
pp. \bfpage{2672}--\blpage{2680}.
\bconflocation{Montr\'eal, Canada}
(\byear{2014})
\end{bchapter}
\endbibitem

\bibitem[\protect\citeauthoryear{Rombach et~al.}{2022}]{RBLE+22}
\begin{bchapter}
\bauthor{\bsnm{Rombach}, \binits{R.}},
\bauthor{\bsnm{Blattmann}, \binits{A.}},
\bauthor{\bsnm{Lorenz}, \binits{D.}},
\bauthor{\bsnm{Esser}, \binits{P.}},
\bauthor{\bsnm{Ommer}, \binits{B.}}:
\bctitle{High-resolution image synthesis with latent diffusion models}.
In: \bbtitle{Proc.~2022 {IEEE}/{CVF} Conference on Computer Vision and Pattern
  Recognition},
\bconflocation{New Orleans, LA},
pp. \bfpage{10684}--\blpage{10695}
(\byear{2022})
\end{bchapter}
\endbibitem

\bibitem[\protect\citeauthoryear{Dhariwal and Nichol}{2021}]{DN21}
\begin{bchapter}
\bauthor{\bsnm{Dhariwal}, \binits{P.}},
\bauthor{\bsnm{Nichol}, \binits{A.}}:
\bctitle{Diffusion models beat {GAN}s on image synthesis}.
In: \beditor{\bsnm{Ranzato}, \binits{M.}},
\beditor{\bsnm{Beygelzimer}, \binits{A.}},
\beditor{\bsnm{Dauphin}, \binits{Y.}},
\beditor{\bsnm{Liang}, \binits{P.S.}},
\beditor{\bsnm{Vaughan}, \binits{J.W.}} (eds.)
\bbtitle{Proc.~35th International Conference on Neural Information Processing
  Systems}.
\bsertitle{Advances in Neural Information Processing Systems},
vol. \bseriesno{34},
pp. \bfpage{8780}--\blpage{8794}
(\byear{2021})
\end{bchapter}
\endbibitem

\bibitem[\protect\citeauthoryear{Ho et~al.}{2020}]{HJA20}
\begin{bchapter}
\bauthor{\bsnm{Ho}, \binits{J.}},
\bauthor{\bsnm{Jain}, \binits{A.}},
\bauthor{\bsnm{Abbeel}, \binits{P.}}:
\bctitle{Denoising diffusion probabilistic models}.
In: \bbtitle{Advances in Neural Information Processing Systems}
vol. \bseriesno{33},
pp. \bfpage{6840}--\blpage{6851}.
\bpublisher{NeurIPS Foundation},
\blocation{San Diego, CA}
(\byear{2020})
\end{bchapter}
\endbibitem

\bibitem[\protect\citeauthoryear{Kingma et~al.}{2021}]{KSBH21}
\begin{barticle}
\bauthor{\bsnm{Kingma}, \binits{D.}},
\bauthor{\bsnm{Salimans}, \binits{T.}},
\bauthor{\bsnm{Poole}, \binits{B.}},
\bauthor{\bsnm{Ho}, \binits{J.}}:
\batitle{Variational diffusion models}.
\bjtitle{Advances in neural information processing systems}
\bvolume{34},
\bfpage{21696}--\blpage{21707}
(\byear{2021})
\end{barticle}
\endbibitem

\bibitem[\protect\citeauthoryear{Song and Ermon}{2019}]{SE19}
\begin{botherref}
\oauthor{\bsnm{Song}, \binits{Y.}},
\oauthor{\bsnm{Ermon}, \binits{S.}}:
Generative modeling by estimating gradients of the data distribution
\textbf{32}
(2019)
\end{botherref}
\endbibitem

\bibitem[\protect\citeauthoryear{Song and Ermon}{2020}]{SE20}
\begin{bchapter}
\bauthor{\bsnm{Song}, \binits{Y.}},
\bauthor{\bsnm{Ermon}, \binits{S.}}:
\bctitle{Improved techniques for training score-based generative models}.
In: \beditor{\bsnm{Larochelle}, \binits{H.}},
\beditor{\bsnm{Ranzato}, \binits{M.}},
\beditor{\bsnm{Hadsell}, \binits{R.}},
\beditor{\bsnm{Balcan}, \binits{M.F.}},
\beditor{\bsnm{Lin}, \binits{H.}} (eds.)
\bbtitle{Proc.~34th International Conference on Neural Information Processing
  Systems}.
\bsertitle{Advances in Neural Information Processing Systems},
vol. \bseriesno{33},
pp. \bfpage{12438}--\blpage{12448}
(\byear{2020})
\end{bchapter}
\endbibitem

\bibitem[\protect\citeauthoryear{Song et~al.}{2021a}]{SDME21}
\begin{barticle}
\bauthor{\bsnm{Song}, \binits{Y.}},
\bauthor{\bsnm{Durkan}, \binits{C.}},
\bauthor{\bsnm{Murray}, \binits{I.}},
\bauthor{\bsnm{Ermon}, \binits{S.}}:
\batitle{Maximum likelihood training of score-based diffusion models}.
\bjtitle{Advances in Neural Information Processing Systems}
\bvolume{34},
\bfpage{1415}--\blpage{1428}
(\byear{2021})
\end{barticle}
\endbibitem

\bibitem[\protect\citeauthoryear{Song et~al.}{2021b}]{SSKK+21}
\begin{bchapter}
\bauthor{\bsnm{Song}, \binits{Y.}},
\bauthor{\bsnm{Sohl-Dickstein}, \binits{J.}},
\bauthor{\bsnm{Kingma}, \binits{D.P.}},
\bauthor{\bsnm{Kumar}, \binits{A.}},
\bauthor{\bsnm{Ermon}, \binits{S.}},
\bauthor{\bsnm{Poole}, \binits{B.}}:
\bctitle{Score-based generative modeling through stochastic differential
  equations}.
In: \bbtitle{Proc.~2021 International Conference on Learning Representations},
\bconflocation{Kigali, Rwanda}
(\byear{2021})
\end{bchapter}
\endbibitem

\bibitem[\protect\citeauthoryear{Bansal et~al.}{2022}]{BBCL+22}
\begin{botherref}
\oauthor{\bsnm{Bansal}, \binits{A.}},
\oauthor{\bsnm{Borgnia}, \binits{E.}},
\oauthor{\bsnm{Chu}, \binits{H.-M.}},
\oauthor{\bsnm{Li}, \binits{J.S.}},
\oauthor{\bsnm{Kazemi}, \binits{H.}},
\oauthor{\bsnm{Huang}, \binits{F.}},
\oauthor{\bsnm{Goldblum}, \binits{M.}},
\oauthor{\bsnm{Geiping}, \binits{J.}},
\oauthor{\bsnm{Goldstein}, \binits{T.}}:
Cold diffusion: Inverting arbitrary image transforms without noise.
{arXiv preprint arXiv:2208.09392~[cs.CV]}
(2022)
\end{botherref}
\endbibitem

\bibitem[\protect\citeauthoryear{Daras et~al.}{2023}]{DDTDM23}
\begin{botherref}
\oauthor{\bsnm{Daras}, \binits{G.}},
\oauthor{\bsnm{Delbracio}, \binits{M.}},
\oauthor{\bsnm{Talebi}, \binits{H.}},
\oauthor{\bsnm{Dimakis}, \binits{A.}},
\oauthor{\bsnm{Milanfar}, \binits{P.}}:
Soft diffusion: Score matching with general corruptions.
Transactions on Machine Learning Research
(2023).
issn. 2835-8856
\end{botherref}
\endbibitem

\bibitem[\protect\citeauthoryear{Rissanen et~al.}{2023}]{RHS23}
\begin{bchapter}
\bauthor{\bsnm{Rissanen}, \binits{S.}},
\bauthor{\bsnm{Heinonen}, \binits{M.}},
\bauthor{\bsnm{Solin}, \binits{A.}}:
\bctitle{Generative modelling with inverse heat dissipation}.
In: \bbtitle{Proc.~11th International Conference on Learning Representations},
\bconflocation{Kigali, Rwanda}
(\byear{2023})
\end{bchapter}
\endbibitem

\bibitem[\protect\citeauthoryear{Hoogeboom and Salimans}{2022}]{HS22}
\begin{botherref}
\oauthor{\bsnm{Hoogeboom}, \binits{E.}},
\oauthor{\bsnm{Salimans}, \binits{T.}}:
Blurring diffusion models.
arXiv preprint arXiv:2209.05557
(2022)
\end{botherref}
\endbibitem

\bibitem[\protect\citeauthoryear{Hagemann et~al.}{2023}]{SHH23}
\begin{bbook}
\bauthor{\bsnm{Hagemann}, \binits{P.L.}},
\bauthor{\bsnm{Hertrich}, \binits{J.}},
\bauthor{\bsnm{Steidl}, \binits{G.}}:
\bbtitle{Generalized Normalizing Flows Via {M}arkov Chains}.
\bsertitle{Elements in Non-local Data Interactions: Foundations and
  Applications}.
\bpublisher{Cambridge University Press},
\blocation{Cambridge}
(\byear{2023})
\end{bbook}
\endbibitem

\bibitem[\protect\citeauthoryear{Weickert et~al.}{2013}]{WHBV13}
\begin{bchapter}
\bauthor{\bsnm{Weickert}, \binits{J.}},
\bauthor{\bsnm{Hagenburg}, \binits{K.}},
\bauthor{\bsnm{Breu{\ss}}, \binits{M.}},
\bauthor{\bsnm{Vogel}, \binits{O.}}:
\bctitle{Linear osmosis models for visual computing}.
In: \beditor{\bsnm{Heyden}, \binits{A.}},
\beditor{\bsnm{Kahl}, \binits{F.}},
\beditor{\bsnm{Olsson}, \binits{C.}},
\beditor{\bsnm{Oskarsson}, \binits{M.}},
\beditor{\bsnm{Tai}, \binits{X.-C.}} (eds.)
\bbtitle{Energy Minimisation Methods in Computer Vision and Pattern
  Recognition}.
\bsertitle{Lecture Notes in Computer Science},
vol. \bseriesno{8081},
pp. \bfpage{26}--\blpage{39}.
\bpublisher{Springer},
\blocation{Berlin}
(\byear{2013})
\end{bchapter}
\endbibitem

\bibitem[\protect\citeauthoryear{Alvarez et~al.}{1993}]{AGLM93}
\begin{barticle}
\bauthor{\bsnm{Alvarez}, \binits{L.}},
\bauthor{\bsnm{Guichard}, \binits{F.}},
\bauthor{\bsnm{Lions}, \binits{P.-L.}},
\bauthor{\bsnm{Morel}, \binits{J.-M.}}:
\batitle{Axioms and fundamental equations in image processing}.
\bjtitle{Archive for Rational Mechanics and Analysis}
\bvolume{123},
\bfpage{199}--\blpage{257}
(\byear{1993})
\end{barticle}
\endbibitem

\bibitem[\protect\citeauthoryear{Iijima}{1962}]{Ii62}
\begin{barticle}
\bauthor{\bsnm{Iijima}, \binits{T.}}:
\batitle{Basic theory on normalization of pattern (in case of typical
  one-dimensional pattern)}.
\bjtitle{Bulletin of the Electrotechnical Laboratory}
\bvolume{26},
\bfpage{368}--\blpage{388}
(\byear{1962}).
\bcomment{In Japanese}
\end{barticle}
\endbibitem

\bibitem[\protect\citeauthoryear{Scherzer and Weickert}{2000}]{SchW98}
\begin{barticle}
\bauthor{\bsnm{Scherzer}, \binits{O.}},
\bauthor{\bsnm{Weickert}, \binits{J.}}:
\batitle{Relations between regularization and diffusion filtering}.
\bjtitle{Journal of Mathematical Imaging and Vision}
\bvolume{12}(\bissue{1}),
\bfpage{43}--\blpage{63}
(\byear{2000})
\end{barticle}
\endbibitem

\bibitem[\protect\citeauthoryear{Weickert}{1998}]{We97}
\begin{bbook}
\bauthor{\bsnm{Weickert}, \binits{J.}}:
\bbtitle{Anisotropic Diffusion in Image Processing}.
\bpublisher{Teubner},
\blocation{Stuttgart}
(\byear{1998})
\end{bbook}
\endbibitem

\bibitem[\protect\citeauthoryear{Peter}{2023}]{Pe23}
\begin{bchapter}
\bauthor{\bsnm{Peter}, \binits{P.}}:
\bctitle{Generalised scale-space properties for probabilistic diffusion
  models}.
In: \beditor{\bsnm{Calatroni}, \binits{L.}},
\beditor{\bsnm{Donatelli}, \binits{M.}},
\beditor{\bsnm{Morigi}, \binits{S.}},
\beditor{\bsnm{Prato}, \binits{M.}},
\beditor{\bsnm{Santavesaria}, \binits{M.}} (eds.)
\bbtitle{Scale Space and Variational Methods in Computer Vision}.
\bsertitle{Lecture Notes in Computer Science},
vol. \bseriesno{14009},
pp. \bfpage{601}--\blpage{613}.
\bpublisher{Springer},
\blocation{Cham}
(\byear{2023})
\end{bchapter}
\endbibitem

\bibitem[\protect\citeauthoryear{Lugmayr et~al.}{2022}]{LDRY+22}
\begin{bchapter}
\bauthor{\bsnm{Lugmayr}, \binits{A.}},
\bauthor{\bsnm{Danelljan}, \binits{M.}},
\bauthor{\bsnm{Romero}, \binits{A.}},
\bauthor{\bsnm{Yu}, \binits{F.}},
\bauthor{\bsnm{Timofte}, \binits{R.}},
\bauthor{},
\bauthor{\bsnm{{Van Gool}}, \binits{L.}}:
\bctitle{Repaint: Inpainting using denoising diffusion probabilistic models}.
In: \bbtitle{Proc.~2022 IEEE/CVF Conference on Computer Vision and Pattern
  Recognition},
\bconflocation{New Orleans, LA},
pp. \bfpage{11461}--\blpage{11471}
(\byear{2022})
\end{bchapter}
\endbibitem

\bibitem[\protect\citeauthoryear{Saharia et~al.}{2022}]{SCCL+22}
\begin{bchapter}
\bauthor{\bsnm{Saharia}, \binits{C.}},
\bauthor{\bsnm{Chan}, \binits{W.}},
\bauthor{\bsnm{Chang}, \binits{H.}},
\bauthor{\bsnm{Lee}, \binits{C.}},
\bauthor{\bsnm{Ho}, \binits{J.}},
\bauthor{\bsnm{Salimans}, \binits{T.}},
\bauthor{\bsnm{Fleet}, \binits{D.}},
\bauthor{\bsnm{Norouzi}, \binits{M.}}:
\bctitle{Palette: Image-to-image diffusion models}.
In: \bbtitle{Proc.~2022 ACM SIGGRAPH Conference},
\bconflocation{Vancouver, Canada}
(\byear{2022}).
\bcomment{articleno. 15}
\end{bchapter}
\endbibitem

\bibitem[\protect\citeauthoryear{Ho et~al.}{2022}]{HSCF+22}
\begin{barticle}
\bauthor{\bsnm{Ho}, \binits{J.}},
\bauthor{\bsnm{Saharia}, \binits{C.}},
\bauthor{\bsnm{Chan}, \binits{W.}},
\bauthor{\bsnm{Fleet}, \binits{D.J.}},
\bauthor{\bsnm{Norouzi}, \binits{M.}},
\bauthor{\bsnm{Salimans}, \binits{T.}}:
\batitle{Cascaded diffusion models for high fidelity image generation}.
\bjtitle{The Journal of Machine Learning Research}
\bvolume{23}(\bissue{1}),
\bfpage{2249}--\blpage{2281}
(\byear{2022})
\end{barticle}
\endbibitem

\bibitem[\protect\citeauthoryear{Wolleb et~al.}{2022}]{WSBV+22}
\begin{bchapter}
\bauthor{\bsnm{Wolleb}, \binits{J.}},
\bauthor{\bsnm{Sandk{\"u}hler}, \binits{R.}},
\bauthor{\bsnm{Bieder}, \binits{F.}},
\bauthor{\bsnm{Valmaggia}, \binits{P.}},
\bauthor{\bsnm{Cattin}, \binits{P.C.}}:
\bctitle{Diffusion models for implicit image segmentation ensembles}.
In: \beditor{\bsnm{Konukoglu}, \binits{E.}},
\beditor{\bsnm{Menze}, \binits{B.}},
\beditor{\bsnm{Venkataraman}, \binits{A.}},
\beditor{\bsnm{Baumgartner}, \binits{C.}},
\beditor{\bsnm{Dou}, \binits{Q.}},
\beditor{\bsnm{Albarqouni}, \binits{S.}} (eds.)
\bbtitle{Proc.~5th International Conference on Medical Imaging with Deep
  Learning,}.
\bsertitle{Proceedings of Machine Learning Research},
vol. \bseriesno{172},
pp. \bfpage{1336}--\blpage{1348}.
\bconflocation{Zurich, Switzerland}
(\byear{2022})
\end{bchapter}
\endbibitem

\bibitem[\protect\citeauthoryear{Ren et~al.}{2023}]{RDTGM+22}
\begin{botherref}
\oauthor{\bsnm{Ren}, \binits{M.}},
\oauthor{\bsnm{Delbracio}, \binits{M.}},
\oauthor{\bsnm{Talebi}, \binits{H.}},
\oauthor{\bsnm{Gerig}, \binits{G.}},
\oauthor{\bsnm{Milanfar}, \binits{P.}}:
Image deblurring with domain generalizable diffusion models
(2023)
\end{botherref}
\endbibitem

\bibitem[\protect\citeauthoryear{Kong et~al.}{2021}]{KPHZ+21}
\begin{bchapter}
\bauthor{\bsnm{Kong}, \binits{Z.}},
\bauthor{\bsnm{Ping}, \binits{W.}},
\bauthor{\bsnm{Huang}, \binits{J.}},
\bauthor{\bsnm{Zhao}, \binits{K.}},
\bauthor{\bsnm{Catanzaro}, \binits{B.}}:
\bctitle{{DiffWave}: A versatile diffusion model for audio synthesis}.
In: \bbtitle{Proc.~9th International Conference on Learning Representations},
\bconflocation{Vienna, Austria}
(\byear{2021})
\end{bchapter}
\endbibitem

\bibitem[\protect\citeauthoryear{Nichol and Dhariwal}{2021}]{ND21}
\begin{bchapter}
\bauthor{\bsnm{Nichol}, \binits{A.Q.}},
\bauthor{\bsnm{Dhariwal}, \binits{P.}}:
\bctitle{Improved denoising diffusion probabilistic models}.
In: \beditor{\bsnm{Meila}, \binits{M.}},
\beditor{\bsnm{Zhang}, \binits{T.}} (eds.)
\bbtitle{Proc.~38th International Conference on Machine Learning}.
\bsertitle{Proceedings of Machine Learning Research},
vol. \bseriesno{139},
pp. \bfpage{8162}--\blpage{8171}.
\bconflocation{Honolulu, HI}
(\byear{2021})
\end{bchapter}
\endbibitem

\bibitem[\protect\citeauthoryear{Lee et~al.}{2022}]{LCKY22}
\begin{bchapter}
\bauthor{\bsnm{Lee}, \binits{S.}},
\bauthor{\bsnm{Chung}, \binits{H.}},
\bauthor{\bsnm{Kim}, \binits{J.}},
\bauthor{\bsnm{Ye}, \binits{J.C.}}:
\bctitle{Progressive deblurring of diffusion models for coarse-to-fine image
  synthesis}.
In: \bbtitle{Proc.~NeurIPS 2022 Workshop on Score-Based Methods}
(\byear{2022})
\end{bchapter}
\endbibitem

\bibitem[\protect\citeauthoryear{Vincent}{2011}]{Vi11}
\begin{barticle}
\bauthor{\bsnm{Vincent}, \binits{P.}}:
\batitle{A connection between score matching and denoising autoencoders}.
\bjtitle{Neural Computation}
\bvolume{23}(\bissue{7}),
\bfpage{1661}--\blpage{1674}
(\byear{2011})
\end{barticle}
\endbibitem

\bibitem[\protect\citeauthoryear{Franzese et~al.}{2023}]{FRRH+23}
\begin{botherref}
\oauthor{\bsnm{Franzese}, \binits{G.}},
\oauthor{\bsnm{Rossi}, \binits{S.}},
\oauthor{\bsnm{Rossi}, \binits{D.}},
\oauthor{\bsnm{Heinonen}, \binits{M.}},
\oauthor{\bsnm{Filippone}, \binits{M.}},
\oauthor{\bsnm{Michiardi}, \binits{P.}}:
Continuous-time functional diffusion processes
(2023)
\end{botherref}
\endbibitem

\bibitem[\protect\citeauthoryear{Lindeberg}{2011}]{Li11}
\begin{barticle}
\bauthor{\bsnm{Lindeberg}, \binits{T.}}:
\batitle{Generalized {G}aussian scale-space axiomatics comprising linear
  scale-space, affine scale-space and spatio-temporal scale-space}.
\bjtitle{Journal of Mathematical Imaging and Vision}
\bvolume{40},
\bfpage{36}--\blpage{81}
(\byear{2011})
\end{barticle}
\endbibitem

\bibitem[\protect\citeauthoryear{Duits et~al.}{2004}]{DFGH04}
\begin{barticle}
\bauthor{\bsnm{Duits}, \binits{R.}},
\bauthor{\bsnm{Florack}, \binits{L.}},
\bauthor{\bsnm{{de Graaf}}, \binits{J.}},
\bauthor{\bsnm{{ter Haar Romeny}}, \binits{B.}}:
\batitle{On the axioms of scale space theory}.
\bjtitle{Journal of Mathematical Imaging and Vision}
\bvolume{20},
\bfpage{267}--\blpage{298}
(\byear{2004})
\end{barticle}
\endbibitem

\bibitem[\protect\citeauthoryear{Schmidt and Weickert}{2016}]{SW16}
\begin{barticle}
\bauthor{\bsnm{Schmidt}, \binits{M.}},
\bauthor{\bsnm{Weickert}, \binits{J.}}:
\batitle{Morphological counterparts of linear shift-invariant scale-spaces}.
\bjtitle{Journal of Mathematical Imaging and Vision}
\bvolume{56}(\bissue{2}),
\bfpage{352}--\blpage{366}
(\byear{2016})
\end{barticle}
\endbibitem

\bibitem[\protect\citeauthoryear{Chambolle and Lucier}{2001}]{CL01}
\begin{barticle}
\bauthor{\bsnm{Chambolle}, \binits{A.}},
\bauthor{\bsnm{Lucier}, \binits{B.L.}}:
\batitle{Interpreting translationally-invariant wavelet shrinkage as a new
  image smoothing scale space}.
\bjtitle{IEEE Transactions on Image Processing}
\bvolume{10}(\bissue{7}),
\bfpage{993}--\blpage{1000}
(\byear{2001})
\end{barticle}
\endbibitem

\bibitem[\protect\citeauthoryear{C\'ardenas et~al.}{2019}]{CPW19}
\begin{bchapter}
\bauthor{\bsnm{C\'ardenas}, \binits{M.}},
\bauthor{\bsnm{Peter}, \binits{P.}},
\bauthor{\bsnm{Weickert}, \binits{J.}}:
\bctitle{Sparsification scale-spaces}.
In: \beditor{\bsnm{Lellmann}, \binits{J.}},
\beditor{\bsnm{Burger}, \binits{M.}},
\beditor{\bsnm{Modersitzki}, \binits{J.}} (eds.)
\bbtitle{Scale Space and Variational Methods in Computer Vision}.
\bsertitle{Lecture Notes in Computer Science},
vol. \bseriesno{11603},
pp. \bfpage{303}--\blpage{314}.
\bpublisher{Springer},
\blocation{Cham}
(\byear{2019})
\end{bchapter}
\endbibitem

\bibitem[\protect\citeauthoryear{Peter}{2021}]{Pe21}
\begin{bchapter}
\bauthor{\bsnm{Peter}, \binits{P.}}:
\bctitle{Quantisation scale-spaces}.
In: \beditor{\bsnm{Elmoataz}, \binits{A.}},
\beditor{\bsnm{Fadili}, \binits{J.}},
\beditor{\bsnm{Qu\'eau}, \binits{Y.}},
\beditor{\bsnm{Rabin}, \binits{J.}},
\beditor{\bsnm{Simon}, \binits{L.}} (eds.)
\bbtitle{Scale Space and Variational Methods in Computer Vision}.
\bsertitle{Lecture Notes in Computer Science},
vol. \bseriesno{12679},
pp. \bfpage{15}--\blpage{26}.
\bpublisher{Springer},
\blocation{Cham}
(\byear{2021})
\end{bchapter}
\endbibitem

\bibitem[\protect\citeauthoryear{Alvarez and Morales}{1994}]{AM94a}
\begin{barticle}
\bauthor{\bsnm{Alvarez}, \binits{L.}},
\bauthor{\bsnm{Morales}, \binits{F.}}:
\batitle{Affine morphological multiscale analysis of corners and multiple
  junctions}.
\bjtitle{International Journal of Computer Vision}
\bvolume{25},
\bfpage{95}--\blpage{107}
(\byear{1994})
\end{barticle}
\endbibitem

\bibitem[\protect\citeauthoryear{Alcantarilla et~al.}{2012}]{ABD12}
\begin{bchapter}
\bauthor{\bsnm{Alcantarilla}, \binits{P.F.}},
\bauthor{\bsnm{Bartoli}, \binits{A.}},
\bauthor{\bsnm{Davison}, \binits{A.J.}}:
\bctitle{{KAZE} features}.
In: \beditor{\bsnm{Fitzgibbon}, \binits{A.}},
\beditor{\bsnm{Lazebnik}, \binits{S.}},
\beditor{\bsnm{Perona}, \binits{P.}},
\beditor{\bsnm{Saito}, \binits{Y.}},
\beditor{\bsnm{Schmid}, \binits{C.}} (eds.)
\bbtitle{Computer Vision -- ECCV 2012}.
\bsertitle{Lecture Notes in Computer Science},
vol. \bseriesno{7574},
pp. \bfpage{214}--\blpage{227}.
\bpublisher{Spinger},
\blocation{Berlin}
(\byear{2012})
\end{bchapter}
\endbibitem

\bibitem[\protect\citeauthoryear{Lowe}{2004}]{Lo04a}
\begin{barticle}
\bauthor{\bsnm{Lowe}, \binits{D.L.}}:
\batitle{Distinctive image features from scale-invariant keypoints}.
\bjtitle{International Journal of Computer Vision}
\bvolume{60}(\bissue{2}),
\bfpage{91}--\blpage{110}
(\byear{2004})
\end{barticle}
\endbibitem

\bibitem[\protect\citeauthoryear{Alvarez et~al.}{2000}]{AWS99a}
\begin{barticle}
\bauthor{\bsnm{Alvarez}, \binits{L.}},
\bauthor{\bsnm{Weickert}, \binits{J.}},
\bauthor{\bsnm{S\'anchez}, \binits{J.}}:
\batitle{Reliable estimation of dense optical flow fields with large
  displacements}.
\bjtitle{International Journal of Computer Vision}
\bvolume{39}(\bissue{1}),
\bfpage{41}--\blpage{56}
(\byear{2000})
\end{barticle}
\endbibitem

\bibitem[\protect\citeauthoryear{Demetz et~al.}{2012}]{DWBZ12}
\begin{bchapter}
\bauthor{\bsnm{Demetz}, \binits{O.}},
\bauthor{\bsnm{Weickert}, \binits{J.}},
\bauthor{\bsnm{Bruhn}, \binits{A.}},
\bauthor{\bsnm{Zimmer}, \binits{H.}}:
\bctitle{Optic flow scale-space}.
In: \beditor{\bsnm{Bruckstein}, \binits{A.M.}},
\beditor{\bsnm{Haar~Romeny}, \binits{B.}},
\beditor{\bsnm{Bronstein}, \binits{A.M.}},
\beditor{\bsnm{Bronstein}, \binits{M.M.}} (eds.)
\bbtitle{Scale Space and Variational Methods in Computer Vision}.
\bsertitle{Lecture Notes in Computer Science},
vol. \bseriesno{6667},
pp. \bfpage{713}--\blpage{724}.
\bpublisher{Springer},
\blocation{Berlin}
(\byear{2012})
\end{bchapter}
\endbibitem

\bibitem[\protect\citeauthoryear{Agustsson et~al.}{2020}]{AMJBT20}
\begin{bchapter}
\bauthor{\bsnm{Agustsson}, \binits{E.}},
\bauthor{\bsnm{Minnen}, \binits{D.}},
\bauthor{\bsnm{Johnston}, \binits{N.}},
\bauthor{\bsnm{Ball\'e}, \binits{J.}},
\bauthor{\bsnm{Hwang}, \binits{S.J.}},
\bauthor{\bsnm{Toderici}, \binits{G.}}:
\bctitle{Scale-space flow for end-to-end optimized video compression}.
In: \bbtitle{Proc.~2020 {IEEE/CVF} Conference on Computer Vision and Pattern
  Recognition},
\bconflocation{Seattle, WA},
pp. \bfpage{8503}--\blpage{8512}
(\byear{2020})
\end{bchapter}
\endbibitem

\bibitem[\protect\citeauthoryear{Witkin}{1983}]{Wi83}
\begin{bchapter}
\bauthor{\bsnm{Witkin}, \binits{A.P.}}:
\bctitle{Scale-space filtering}.
In: \bbtitle{Proc.~Eighth International Joint Conference on Artificial
  Intelligence},
vol. \bseriesno{2}.
\bconflocation{Karlsruhe, West Germany},
pp. \bfpage{945}--\blpage{951}
(\byear{1983})
\end{bchapter}
\endbibitem

\bibitem[\protect\citeauthoryear{Lindeberg}{1994}]{Li94}
\begin{bbook}
\bauthor{\bsnm{Lindeberg}, \binits{T.}}:
\bbtitle{Scale-Space Theory in Computer Vision}.
\bpublisher{Kluwer},
\blocation{Boston}
(\byear{1994})
\end{bbook}
\endbibitem

\bibitem[\protect\citeauthoryear{Sporring et~al.}{1997}]{SNFJ96}
\begin{bbook}
\beditor{\bsnm{Sporring}, \binits{J.}},
\beditor{\bsnm{Nielsen}, \binits{M.}},
\beditor{\bsnm{Florack}, \binits{L.}},
\beditor{\bsnm{Johansen}, \binits{P.}} (eds.):
\bbtitle{Gaussian Scale-Space Theory}.
\bsertitle{Computational Imaging and Vision},
vol. \bseriesno{8}.
\bpublisher{Kluwer},
\blocation{Dordrecht}
(\byear{1997})
\end{bbook}
\endbibitem

\bibitem[\protect\citeauthoryear{Florack}{2013}]{Fl13}
\begin{bbook}
\bauthor{\bsnm{Florack}, \binits{L.}}:
\bbtitle{Image Structure}.
\bsertitle{Computational Imaging and Vision},
vol. \bseriesno{10}.
\bpublisher{Springer},
\blocation{Dordrecht}
(\byear{2013})
\end{bbook}
\endbibitem

\bibitem[\protect\citeauthoryear{Felsberg and Sommer}{2001}]{FS01}
\begin{bchapter}
\bauthor{\bsnm{Felsberg}, \binits{M.}},
\bauthor{\bsnm{Sommer}, \binits{G.}}:
\bctitle{Scale-adaptive filtering derived from the {Laplace} equation}.
In: \beditor{\bsnm{Radig}, \binits{B.}},
\beditor{\bsnm{Florczyk}, \binits{S.}} (eds.)
\bbtitle{Pattern Recognition}.
\bsertitle{Lecture Notes in Computer Science},
vol. \bseriesno{2032},
pp. \bfpage{95}--\blpage{106}.
\bpublisher{Springer},
\blocation{Berlin}
(\byear{2001})
\end{bchapter}
\endbibitem

\bibitem[\protect\citeauthoryear{Burgeth et~al.}{2005}]{BDW05}
\begin{bchapter}
\bauthor{\bsnm{Burgeth}, \binits{B.}},
\bauthor{\bsnm{Didas}, \binits{S.}},
\bauthor{\bsnm{Weickert}, \binits{J.}}:
\bctitle{Relativistic scale-spaces}.
In: \beditor{\bsnm{Kimmel}, \binits{R.}},
\beditor{\bsnm{Sochen}, \binits{N.}},
\beditor{\bsnm{Weickert}, \binits{J.}} (eds.)
\bbtitle{Scale Space and {PDE} Methods in Computer Vision}.
\bsertitle{Lecture Notes in Computer Science},
vol. \bseriesno{3459},
pp. \bfpage{1}--\blpage{12}.
\bpublisher{Springer},
\blocation{Berlin}
(\byear{2005})
\end{bchapter}
\endbibitem

\bibitem[\protect\citeauthoryear{Brockett and Maragos}{1992}]{BM92}
\begin{bchapter}
\bauthor{\bsnm{Brockett}, \binits{R.W.}},
\bauthor{\bsnm{Maragos}, \binits{P.}}:
\bctitle{Evolution equations for continuous-scale morphology}.
In: \bbtitle{Proc.~1992 IEEE International Conference on Acoustics, Speech and
  Signal Processing},
vol. \bseriesno{3}.
\bconflocation{San Francisco, {CA}},
pp. \bfpage{125}--\blpage{128}
(\byear{1992})
\end{bchapter}
\endbibitem

\bibitem[\protect\citeauthoryear{{van den Boomgaard} and
  Smeulders}{1994}]{BS94}
\begin{barticle}
\bauthor{\bsnm{{van den Boomgaard}}, \binits{R.}},
\bauthor{\bsnm{Smeulders}, \binits{A.}}:
\batitle{The morphological structure of images: The differential equations of
  morphological scale-space}.
\bjtitle{IEEE Transactions on Pattern Analysis and Machine Intelligence}
\bvolume{16},
\bfpage{1101}--\blpage{1113}
(\byear{1994})
\end{barticle}
\endbibitem

\bibitem[\protect\citeauthoryear{Caselles and Sbert}{1996}]{CS96}
\begin{barticle}
\bauthor{\bsnm{Caselles}, \binits{V.}},
\bauthor{\bsnm{Sbert}, \binits{C.}}:
\batitle{What is the best causal scale space for three-dimensional images?}
\bjtitle{SIAM Journal on Applied Mathematics}
\bvolume{56}(\bissue{4}),
\bfpage{1199}--\blpage{1246}
(\byear{1996})
\end{barticle}
\endbibitem

\bibitem[\protect\citeauthoryear{Kimia and Siddiqi}{1996}]{KS96}
\begin{barticle}
\bauthor{\bsnm{Kimia}, \binits{B.B.}},
\bauthor{\bsnm{Siddiqi}, \binits{K.}}:
\batitle{Geometric heat equation and non-linear diffusion of shapes and
  images}.
\bjtitle{Computer Vision and Image Understanding}
\bvolume{64},
\bfpage{305}--\blpage{322}
(\byear{1996})
\end{barticle}
\endbibitem

\bibitem[\protect\citeauthoryear{Sapiro and Tannenbaum}{1993}]{ST93}
\begin{barticle}
\bauthor{\bsnm{Sapiro}, \binits{G.}},
\bauthor{\bsnm{Tannenbaum}, \binits{A.}}:
\batitle{Affine invariant scale-space}.
\bjtitle{International Journal of Computer Vision}
\bvolume{11},
\bfpage{25}--\blpage{44}
(\byear{1993})
\end{barticle}
\endbibitem

\bibitem[\protect\citeauthoryear{Majer}{2000}]{Ma00}
\begin{botherref}
\oauthor{\bsnm{Majer}, \binits{P.}}:
A statistical approach to feature detection and scale selection in images.
PhD thesis,
Department of Mathematics, University of Göttingen,
G\"ottingen, Germany
(2000)
\end{botherref}
\endbibitem

\bibitem[\protect\citeauthoryear{Koenderink and {Van Doorn}}{1999}]{KV99}
\begin{barticle}
\bauthor{\bsnm{Koenderink}, \binits{J.J.}},
\bauthor{\bsnm{{Van Doorn}}, \binits{A.J.}}:
\batitle{The structure of locally orderless images}.
\bjtitle{International Journal of Computer Vision}
\bvolume{31}(\bissue{2}),
\bfpage{159}--\blpage{168}
(\byear{1999})
\end{barticle}
\endbibitem

\bibitem[\protect\citeauthoryear{Pedersen}{2003}]{Pe03}
\begin{bchapter}
\bauthor{\bsnm{Pedersen}, \binits{K.S.}}:
\bctitle{Properties of {B}rownian image models in scale-space}.
In: \beditor{\bsnm{Griffin}, \binits{L.D.}},
\beditor{\bsnm{Lillholm}, \binits{M.}} (eds.)
\bbtitle{Scale-Space Methods in Computer Vision}.
\bsertitle{Lecture Notes in Computer Science},
vol. \bseriesno{2695},
pp. \bfpage{281}--\blpage{296}.
\bpublisher{Springer},
\blocation{Berlin}
(\byear{2003})
\end{bchapter}
\endbibitem

\bibitem[\protect\citeauthoryear{Huckemann et~al.}{2016}]{HKMR+16}
\begin{barticle}
\bauthor{\bsnm{Huckemann}, \binits{S.}},
\bauthor{\bsnm{Kim}, \binits{K.-R.}},
\bauthor{\bsnm{Munk}, \binits{A.}},
\bauthor{\bsnm{Rehfeldt}, \binits{F.}},
\bauthor{\bsnm{Sommerfeld}, \binits{M.}},
\bauthor{\bsnm{Weickert}, \binits{J.}},
\bauthor{\bsnm{Wollnik}, \binits{C.}}:
\batitle{The circular sizer, inferred persistence of shape parameters and
  application to early stem cell differentiation}.
\bjtitle{Bernoulli}
\bvolume{22}(\bissue{4}),
\bfpage{2113}--\blpage{2142}
(\byear{2016})
\end{barticle}
\endbibitem

\bibitem[\protect\citeauthoryear{Zach et~al.}{2023}]{ZPKC23}
\begin{bchapter}
\bauthor{\bsnm{Zach}, \binits{M.}},
\bauthor{\bsnm{Pock}, \binits{T.}},
\bauthor{\bsnm{Kobler}, \binits{E.}},
\bauthor{\bsnm{Chambolle}, \binits{A.}}:
\bctitle{Explicit diffusion of {G}aussian mixture model based image priors}.
In: \beditor{\bsnm{Calatroni}, \binits{L.}},
\beditor{\bsnm{Donatelli}, \binits{M.}},
\beditor{\bsnm{Morigi}, \binits{S.}},
\beditor{\bsnm{Prato}, \binits{M.}},
\beditor{\bsnm{Santavesaria}, \binits{M.}} (eds.)
\bbtitle{Scale Space and Variational Methods in Computer Vision}.
\bsertitle{Lecture Notes in Computer Science},
pp. \bfpage{3}--\blpage{15}.
\bpublisher{Springer},
\blocation{Cham}
(\byear{2023})
\end{bchapter}
\endbibitem

\bibitem[\protect\citeauthoryear{Risken}{1984}]{Ri84}
\begin{bbook}
\bauthor{\bsnm{Risken}, \binits{H.}}:
\bbtitle{The Fokker--Planck Equation}.
\bpublisher{Springer},
\blocation{New York}
(\byear{1984})
\end{bbook}
\endbibitem

\bibitem[\protect\citeauthoryear{Sochen}{2001}]{So01b}
\begin{bchapter}
\bauthor{\bsnm{Sochen}, \binits{N.A.}}:
\bctitle{Stochastic processes in vision: {F}rom {L}angevin to {B}eltrami}.
In: \bbtitle{Proc.~Eighth International Conference on Computer Vision},
vol. \bseriesno{1}.
\bconflocation{Vancouver, Canada},
pp. \bfpage{288}--\blpage{293}
(\byear{2001})
\end{bchapter}
\endbibitem

\bibitem[\protect\citeauthoryear{Schmidt}{2018}]{Sc18}
\begin{botherref}
\oauthor{\bsnm{Schmidt}, \binits{M.}}:
Linear scale-spaces in image processing: Drift–diffusion and connections to
  mathematical morphology.
PhD thesis,
Department of Mathematics, Saarland University,
Saarbr\"ucken, Germany
(2018)
\end{botherref}
\endbibitem

\bibitem[\protect\citeauthoryear{Vogel et~al.}{2013}]{VHWS13}
\begin{bchapter}
\bauthor{\bsnm{Vogel}, \binits{O.}},
\bauthor{\bsnm{Hagenburg}, \binits{K.}},
\bauthor{\bsnm{Weickert}, \binits{J.}},
\bauthor{\bsnm{Setzer}, \binits{S.}}:
\bctitle{A fully discrete theory for linear osmosis filtering}.
In: \beditor{\bsnm{Kuijper}, \binits{A.}},
\beditor{\bsnm{Bredies}, \binits{K.}},
\beditor{\bsnm{Pock}, \binits{T.}},
\beditor{\bsnm{Bischof}, \binits{H.}} (eds.)
\bbtitle{Scale Space and Variational Methods in Computer Vision}.
\bsertitle{Lecture Notes in Computer Science},
vol. \bseriesno{7893},
pp. \bfpage{368}--\blpage{379}.
\bpublisher{Springer},
\blocation{Berlin}
(\byear{2013})
\end{bchapter}
\endbibitem

\bibitem[\protect\citeauthoryear{d'Autume et~al.}{2018}]{DMM18}
\begin{bchapter}
\bauthor{\bsnm{d'Autume}, \binits{M.}},
\bauthor{\bsnm{Morel}, \binits{J.-M.}},
\bauthor{\bsnm{Meinhardt-Llopis}, \binits{E.}}:
\bctitle{A flexible solution to the osmosis equation for seamless cloning and
  shadow removal}.
In: \bbtitle{Proc~2018 {IEEE} International Conference on Image Processing},
\bconflocation{Athens, Greece},
pp. \bfpage{2147}--\blpage{2151}
(\byear{2018})
\end{bchapter}
\endbibitem

\bibitem[\protect\citeauthoryear{Parisotto et~al.}{2019}]{PCCSW19}
\begin{botherref}
\oauthor{\bsnm{Parisotto}, \binits{S.}},
\oauthor{\bsnm{Calatroni}, \binits{L.}},
\oauthor{\bsnm{Caliari}, \binits{M.}},
\oauthor{},
\oauthor{\bsnm{Sch{\"o}nlieb}, \binits{C.-B.}},
\oauthor{\bsnm{Weickert}, \binits{J.}}:
Anisotropic osmosis filtering for shadow removal in images.
Inverse Problems
\textbf{35}(5)
(2019).
Article 054001
\end{botherref}
\endbibitem

\bibitem[\protect\citeauthoryear{Parisotto et~al.}{2018}]{PCD18}
\begin{bchapter}
\bauthor{\bsnm{Parisotto}, \binits{S.}},
\bauthor{\bsnm{Calatroni}, \binits{L.}},
\bauthor{\bsnm{Daffara}, \binits{C.}}:
\bctitle{Digital cultural heritage imaging via osmosis filtering}.
In: \beditor{\bsnm{Mansouri}, \binits{A.}},
\beditor{\bsnm{{El Moataz}}, \binits{A.}},
\beditor{\bsnm{Nouboud}, \binits{F.}},
\beditor{\bsnm{Mammass}, \binits{D.}} (eds.)
\bbtitle{Image and Signal Processing}.
\bsertitle{Lecture Notes in Computer Science},
vol. \bseriesno{10884},
pp. \bfpage{407}--\blpage{415}.
\bpublisher{Springer},
\blocation{Cham}
(\byear{2018})
\end{bchapter}
\endbibitem

\bibitem[\protect\citeauthoryear{Parisotto et~al.}{2020}]{PCBP+20}
\begin{barticle}
\bauthor{\bsnm{Parisotto}, \binits{S.}},
\bauthor{\bsnm{Calatroni}, \binits{L.}},
\bauthor{\bsnm{Bugeau}, \binits{A.}},
\bauthor{\bsnm{Papadakis}, \binits{N.}},
\bauthor{\bsnm{Sch{\"o}nlieb}, \binits{C.-B.}}:
\batitle{Variational osmosis for non-linear image fusion}.
\bjtitle{IEEE Transactions on Image Processing}
\bvolume{29},
\bfpage{5507}--\blpage{5516}
(\byear{2020})
\end{barticle}
\endbibitem

\bibitem[\protect\citeauthoryear{Bungert et~al.}{2023}]{BPW23}
\begin{bchapter}
\bauthor{\bsnm{Bungert}, \binits{P.}},
\bauthor{\bsnm{Peter}, \binits{P.}},
\bauthor{\bsnm{Weickert}, \binits{J.}}:
\bctitle{Image blending with osmosis}.
In: \beditor{\bsnm{Calatroni}, \binits{L.}},
\beditor{\bsnm{Donatelli}, \binits{M.}},
\beditor{\bsnm{Morigi}, \binits{S.}},
\beditor{\bsnm{Prato}, \binits{M.}},
\beditor{\bsnm{Santavesaria}, \binits{M.}} (eds.)
\bbtitle{Scale Space and Variational Methods in Computer Vision}.
\bsertitle{Lecture Notes in Computer Science},
vol. \bseriesno{14009},
pp. \bfpage{652}--\blpage{664}.
\bpublisher{Springer},
\blocation{Cham}
(\byear{2023})
\end{bchapter}
\endbibitem

\bibitem[\protect\citeauthoryear{Hagenburg et~al.}{2012}]{HBWV12}
\begin{bchapter}
\bauthor{\bsnm{Hagenburg}, \binits{K.}},
\bauthor{\bsnm{Breuß}, \binits{M.}},
\bauthor{\bsnm{Weickert}, \binits{J.}},
\bauthor{\bsnm{Vogel}, \binits{O.}}:
\bctitle{Novel schemes for hyperbolic {PDE}s using osmosis filters from visual
  computing}.
In: \beditor{\bsnm{Bruckstein}, \binits{A.M.}},
\beditor{\bsnm{{ter Haar Romeny}}, \binits{B.}},
\beditor{\bsnm{Bronstein}, \binits{A.M.}},
\beditor{\bsnm{Bronstein}, \binits{M.M.}} (eds.)
\bbtitle{Scale Space and Variational Methods in Computer Vision}.
\bsertitle{Lecture Notes in Computer Science},
vol. \bseriesno{6667},
pp. \bfpage{532}--\blpage{543}.
\bpublisher{Springer},
\blocation{Berlin}
(\byear{2012})
\end{bchapter}
\endbibitem

\bibitem[\protect\citeauthoryear{Hagenburg et~al.}{2009}]{HBVW09}
\begin{bchapter}
\bauthor{\bsnm{Hagenburg}, \binits{K.}},
\bauthor{\bsnm{Breu{\ss}}, \binits{M.}},
\bauthor{\bsnm{Vogel}, \binits{O.}},
\bauthor{\bsnm{Weickert}, \binits{J.}},
\bauthor{\bsnm{Welk}, \binits{M.}}:
\bctitle{A lattice {B}oltzmann model for rotationally invariant dithering}.
In: \beditor{\bsnm{Bebis}, \binits{G.}},
\beditor{\bsnm{Boyle}, \binits{R.}},
\beditor{\bsnm{Parvin}, \binits{B.}},
\beditor{\bsnm{Koracin}, \binits{D.}},
\beditor{\bsnm{Kuno}, \binits{Y.}},
\beditor{\bsnm{Wang}, \binits{J.}},
\beditor{\bsnm{Pajarola}, \binits{R.}},
\beditor{\bsnm{Lindstrom}, \binits{P.}},
\beditor{\bsnm{Hinkenjann}, \binits{A.}},
\beditor{\bsnm{{Encarna\c{c}\~{a}o}}, \binits{M.L.}},
\beditor{\bsnm{Silva}, \binits{C.T.}},
\beditor{\bsnm{Coming}, \binits{D.}} (eds.)
\bbtitle{Advances in Visual Computing}.
\bsertitle{Lecture Notes in Computer Science},
vol. \bseriesno{5876},
pp. \bfpage{949}--\blpage{959}.
\bpublisher{Springer},
\blocation{Berlin}
(\byear{2009})
\end{bchapter}
\endbibitem

\bibitem[\protect\citeauthoryear{Illner and Neunzert}{1993}]{IN93}
\begin{barticle}
\bauthor{\bsnm{Illner}, \binits{R.}},
\bauthor{\bsnm{Neunzert}, \binits{H.}}:
\batitle{Relative entropy maximization and directed diffusion equations}.
\bjtitle{Mathematical Methods in the Applied Sciences}
\bvolume{16},
\bfpage{545}--\blpage{554}
(\byear{1993})
\end{barticle}
\endbibitem

\bibitem[\protect\citeauthoryear{Georgiev}{2006}]{Ge06}
\begin{bchapter}
\bauthor{\bsnm{Georgiev}, \binits{T.}}:
\bctitle{Covariant derivatives and vision}.
In: \beditor{\bsnm{Bischof}, \binits{H.}},
\beditor{\bsnm{Leonardis}, \binits{A.}},
\beditor{\bsnm{Pinz}, \binits{A.}} (eds.)
\bbtitle{Computer Vision -- {ECCV} 2006, Part IV}.
\bsertitle{Lecture Notes in Computer Science},
vol. \bseriesno{3954},
pp. \bfpage{56}--\blpage{69}.
\bpublisher{Springer},
\blocation{Berlin}
(\byear{2006})
\end{bchapter}
\endbibitem

\bibitem[\protect\citeauthoryear{Gardiner}{1985}]{Ga85}
\begin{bbook}
\bauthor{\bsnm{Gardiner}, \binits{C.W.}}:
\bbtitle{Handbook of Stochastic Methods for Physics, Chemistry and the Natural
  Sciences}.
\bsertitle{Springer Series in Synergetics},
vol. \bseriesno{13}.
\bpublisher{Springer},
\blocation{Berlin}
(\byear{1985})
\end{bbook}
\endbibitem

\bibitem[\protect\citeauthoryear{Feller}{1949}]{Fe49}
\begin{bchapter}
\bauthor{\bsnm{Feller}, \binits{W.}}:
\bctitle{On the theory of stochastic processes, with particular reference to
  applications}.
In: \bbtitle{First Berkeley Symposium on Mathematical Statistics and
  Probability},
\bconflocation{Berkeley, CA},
pp. \bfpage{403}--\blpage{432}
(\byear{1949})
\end{bchapter}
\endbibitem

\bibitem[\protect\citeauthoryear{Meister}{2015}]{Me15}
\begin{bbook}
\bauthor{\bsnm{Meister}, \binits{A.}}:
\bbtitle{Numerik Linearer Gleichungssysteme},
\bedition{5th} edn.
\bpublisher{Vieweg},
\blocation{Braunschweig}
(\byear{2015})
\end{bbook}
\endbibitem

\bibitem[\protect\citeauthoryear{Arbelaez et~al.}{2011}]{AMFM11}
\begin{barticle}
\bauthor{\bsnm{Arbelaez}, \binits{P.}},
\bauthor{\bsnm{Maire}, \binits{M.}},
\bauthor{\bsnm{Fowlkes}, \binits{C.}},
\bauthor{\bsnm{Malik}, \binits{J.}}:
\batitle{Contour detection and hierarchical image segmentation}.
\bjtitle{IEEE Transactions on Pattern Analysis and Machine Intelligence}
\bvolume{33}(\bissue{5}),
\bfpage{898}--\blpage{916}
(\byear{2011})
\end{barticle}
\endbibitem

\bibitem[\protect\citeauthoryear{Heusel et~al.}{2017}]{HRUNH17}
\begin{botherref}
\oauthor{\bsnm{Heusel}, \binits{M.}},
\oauthor{\bsnm{Ramsauer}, \binits{H.}},
\oauthor{\bsnm{Unterthiner}, \binits{T.}},
\oauthor{\bsnm{Nessler}, \binits{B.}},
\oauthor{\bsnm{Hochreiter}, \binits{S.}}:
{GAN}s trained by a two time-scale update rule converge to a local nash
  equilibrium
\textbf{30},
6629--6640
(2017)
\end{botherref}
\endbibitem

\bibitem[\protect\citeauthoryear{Parmar et~al.}{2022}]{PZZ22}
\begin{bchapter}
\bauthor{\bsnm{Parmar}, \binits{G.}},
\bauthor{\bsnm{Zhang}, \binits{R.}},
\bauthor{\bsnm{Zhu}, \binits{J.}}:
\bctitle{On aliased resizing and surprising subtleties in {GAN} evaluation}.
In: \bbtitle{Proc.~2022 {IEEE}/{CVF} Conference on Computer Vision and Pattern
  Recognition},
\bconflocation{New Orleans, LA},
pp. \bfpage{10684}--\blpage{10695}
(\byear{2022})
\end{bchapter}
\endbibitem

\end{thebibliography}

\end{document}